\documentclass[11pt]{article}
\usepackage{amsmath, amssymb}
\usepackage{graphicx}
\usepackage{comment}
\usepackage{geometry}
\usepackage{setspace}
\usepackage[hidelinks]{hyperref}
\usepackage[authoryear,round]{natbib}
\usepackage{amsthm}
\usepackage{xcolor}
\usepackage[utf8]{inputenc}
\usepackage{subcaption}
\usepackage{placeins}
\usepackage{caption}     
\usepackage{booktabs}
\usepackage[english]{babel}
\usepackage{csquotes}
\MakeOuterQuote{"}

\newtheorem{proposition}{Proposition}
\newtheorem{lemma}{Lemma}
\newtheorem{definition}{Definition}
\newtheorem{assumption}{Assumption}

\newtheorem{remark}{Remark}

\title{The Augmentation Trap: AI Productivity and the Cost of Cognitive Offloading}
\author{Michael Caosun and Sinan Aral}
\date{}

\begin{document}
\maketitle

\begin{abstract}
Experimental evidence suggests that AI tools raise worker productivity, but also that sustained offloading can erode expertise. This creates a tradeoff when expertise is a complement to AI. To explore the consequences of this tradeoff, we develop a dynamic model in which a decision-maker chooses the intensity of practice-displacing AI offloading for a worker over time, trading immediate productivity against the erosion of worker skill. We decompose the tool's productivity effect into two components, one independent of worker expertise and one that scales with it. The model produces three main results. First, a decision-maker who fully anticipates skill erosion still rationally adopts AI when front-loaded gains outweigh long-run skill costs, lowering long-run productivity. The decomposition sorts deployments into five regions by their long-run effect, separating beneficial from harmful adoption. Second, the tradeoff introduces the potential for misaligned incentives to harm workers. When managers are short-termist or worker skill has external value, AI use can leave the worker worse off than with no AI, the outcome we call the augmentation trap. Third, when AI substitutes strongly enough for expertise, offloading can generate permanent divergence, with high-skill workers realizing their potential and low-skill workers deskilling. Small differences in managerial incentives can determine which path a worker takes.
\end{abstract}

\section{Introduction}

There is now substantial experimental evidence that AI tools raise worker productivity, and that the gains depend on both the task and the expertise of the user \citep{NoyZhang2023,BrynjolfssonLiRaymond2023,PengEtAl2023CopilotRCT,OtisEtAl2024Entrepreneurs,CuiEtAl2025,JuAral2025Collaborating}.\footnote{In a meta-analysis of 106 studies, \citet{VaccaroAlmaatouqMalone2024} find that human-AI synergy depends on the task. Teams underperform the best of human or AI alone on decision tasks and outperform on content creation.} There is also growing evidence that sustained AI use erodes the skill on which those gains depend. Previous waves of automation largely targeted tasks that could be codified and delegated to specialized systems. Language models differ because, as general-purpose tools, they can be applied to a broad range of cognitive tasks, and their value can scale with the expertise of the user. But the cognitive work that AI takes over can also be important for developing such expertise. If sustained AI use displaces the practice through which skill accumulates, short-run experiments will underestimate the long-run consequences \citep{BondiJohnson2026}. Using AI tools therefore introduces a tradeoff between the productivity they deliver now and the expertise that future productivity depends on.

To illustrate the intuition, let us consider the example of programming. A seasoned programmer can evaluate AI-generated code and spot mistakes, anticipate technical debt, and reject poor suggestions, whereas a novice is more likely to accept that output at face value. That expert judgment is built and maintained through the continuous practice of coding solutions and debugging failures. When looming deadlines make it rational to rely on passable AI output, programmers may stop exercising these core skills. Over months of routine use, even veterans begin missing errors they once caught easily. To put it briefly, in many domains expertise is forged by solving hard problems, not just acquiring the answer. Using AI to bypass that reasoning process appears to augment the worker's intelligence while actually automating its decline.

Several recent studies document patterns consistent with this mechanism. A year-long study of cancer specialists found that initial productivity gains from AI decision support came with a gradual dulling of expert judgment, which the authors term ``intuition rust'' \citep{EhsanEtAl2026AsymptomaticHarms}. Students who used ChatGPT for learning retained significantly less material at 45-day follow-up than those who learned without it \citep{Barcaui2025CognitiveCrutch}. Survey evidence, neuroimaging, and programming experiments find reduced cognitive engagement and weaker independent performance \citep{LeeEtAl2025CriticalThinking,KosmynaEtAl2025BrainOnChatGPT}. \citet{ShenTamkin2026SkillFormation} find that offloading drives skill loss. Participants who delegated coding tasks learned the least, while those who stayed cognitively engaged preserved their learning. \citet{Sarkar2026} documents that when working with agents, experienced developers produce outputs more aligned with their intent, while with autocompletion, less experienced workers accept more of the model's suggestions. If offloading drives skill loss even when the goal is learning, production settings where the incentive to preserve skill is weaker are unlikely to fare better.

We model how worker skill evolves as a function of practice-displacing offloading intensity when productivity gains from AI depend on worker skill. We decompose the tool's productivity effect into a skill-neutral and a skill-complementary component. The model yields three results. First, a fully informed decision-maker can rationally choose usage that ends in lower long-run output, because the productivity gains are immediate and the skill costs are gradual. Second, when the worker's horizon is long and the decision-maker's is sufficiently shorter, this steady-state loss becomes the augmentation trap, and the worker ends up worse off. Third, when AI substitutes strongly enough for expertise, the model can generate divergence: high-skill workers accumulate skill while low-skill workers deskill.

The model does not show that all AI use erodes skill. It applies specifically to contexts in which AI offloads practice that builds a skill that will remain valuable for production with AI. When AI instead teaches, provides feedback, or creates deliberate practice, the mechanism is weakened or reversed.

Our model draws on the IT productivity literature, which has established that IT returns depend on complementary human capital and organizational practices \citep{BrynjolfssonHitt2000,AralWeill2007,TambeHitt2012,Rock2019}. The central finding of that literature is that returns came from firms that restructured work around the technology. The $(\alpha,\beta)$ decomposition formalizes how production can be restructured. High $\beta$ means worker judgment remains productive in AI-assisted output, while high $\alpha$ with low $\beta$ means AI handles the work on its own regardless of who uses it. The skill atrophy associated with using AI depends on the offloading intensity $u$. The mechanism of skill atrophy is closely related to \citet{AcemogluPischke1998,AcemogluPischke1999}, which models firm incentives for investment in training. Specifically, the mechanism of worker skill atrophy can be understood as the opposite of training. Our skill dynamics are closely related to \citet{GanuthulaSingh2026}, who model skill as the balance between learning from practice and forgetting under AI assistance, and use simulations to show that sustained AI assistance can raise current performance while degrading long-run skill. Their framework treats the AI-assistance path as exogenously specified, whereas we endogenize offloading as a decision and distinguish skill-neutral from skill-complementary productivity channels.

The model also draws on the literature on automation bias, the tendency to defer to automated aids and lose the ability to perform without them, studied for decades in aviation and medical monitoring \citep{ParasuramanRiley1997,GoddardEtAl2012}. Operators follow incorrect automated advice and fail to notice problems. Neither experience nor training reliably eliminates the effect. \citet{LebovitzLifshitzAssafLevina2022} document this shift in radiology, where AI decision support changed the kind of cognitive work practitioners did on each case. \citet{DellAcquaEtAl2023} find a complementary pattern in a field experiment at a management consultancy. On tasks outside AI's capability frontier, workers with AI access who relied on incorrect AI output performed significantly worse than those without access. The IS post-adoption literature documents that system use often narrows after adoption, with workers settling into routinized use of a small feature set, and distinguishes leaner from richer ways of using a given system \citep{JaspersonEtAl2005,BurtonJonesStraub2006}. If engagement with AI output also narrows over time, the effective $\beta$ of a deployment drifts downward. Since our model holds $(\alpha,\beta)$ fixed, it may understate the problem. These literatures separately address skill complementarity, training investment distortions, and automation bias. Our contribution is to show that rational AI adoption can lower long-run productivity, and can leave workers worse off than no AI under incentive misalignment. This is particularly relevant when managerial policies mandate AI usage that forgoes parts of the job through which expertise develops.

The model also distinguishes two organizational uses of AI: performance extraction, in which the organization draws on worker expertise while reducing the practice that maintains it, and skill preservation, in which AI changes the task but leaves workers responsible for consequential judgment. The same AI system can occupy different regions of the model depending on how it is embedded in work.

The rest of the paper is organized as follows. Section~\ref{sec:model} introduces the dynamic model, classifies deployments into five regions using the $(\alpha,\beta)$ decomposition, shows that two forms of misaligned incentives (managerial short-termism and a worker skill externality) can push organizational AI deployment toward the trap region, and characterizes when deployment produces skill stratification. Section~\ref{sec:Discussion} discusses implications and identifies testable predictions. Section~\ref{sec:conclusion} concludes. All proofs are in the Appendix unless noted otherwise.

\section{Model}\label{sec:model}
\subsection{Model Setup}\label{sec:static}

Given access to an AI tool, a decision-maker chooses the intensity of practice-displacing cognitive offloading $u_{it} \in [0,1]$ for a worker to maximize discounted productivity. $u_{it}=0$ means the worker performs the relevant cognitive work unaided. $u_{it}=1$ means the worker fully delegates the practice that builds skill. Intermediate values represent workflows in which some fraction of the relevant practice is displaced. AI-assisted review, tutoring, critique, or explanation need not correspond to high $u_{it}$ if it preserves the cognitive practice that builds skill. The productivity gain from using AI can be decomposed into two components: a productivity gain independent from skill, which can be understood as tasks the AI handles independently, and a skill-dependent gain, which can be understood as the gain from tasks that benefit from additional human context. While using AI improves productivity, it also displaces practice, which is an opportunity for learning. Let $S_{it}$ denote worker $i$'s skill at time $t$, so working without AI yields output $S_{it}$.

\subsubsection*{Productivity}

Productivity combines a human output component that decreases in AI use, and an AI output component that increases in skill and in usage. This productivity effect has diminishing marginal returns, represented by $\gamma$:
\[
p(S_{it},u_{it})
=\;\underbrace{(1-u_{it})\,S_{it}}_{\text{human contribution}} \;+\;
\underbrace{\big[\alpha+\beta S_{it}-\gamma u_{it}\big]\,u_{it}}_{\text{productivity effect of AI usage}}.
\]

The first term in the AI contribution, $\alpha$, is the skill-independent productivity gain from using AI. The second term, $\beta S_{it}$, is the gain from tasks where the quality of AI output depends on the worker's judgment. Together, $\alpha$ and $\beta$ characterize the productivity of a \emph{usage practice}. There are many ways to interact with AI, each with different productivity parameters, so a particular AI tool has different parameters depending on the workflow. For example, in template-based report drafting, the model handles most of the work: $\alpha$ is high, and $\beta$ is low, so a senior partner extracts only marginally more value than a first-year associate. Client strategy work has low $\alpha$ and high $\beta$, because the model alone provides little, but a veteran consultant who knows what questions to ask can extract significant insight. For the programmer from the introduction, wholesale delegation of code generation is a high-$\alpha$, low-$\beta$ practice. When AI tools are unreliable, reviewing and correcting the output would be a practice with high $\beta$.

When $\beta > 1$, the productivity gain from AI more than compensates for the displaced human contribution, so higher-skill workers benefit more from the tool. When $\beta < 1$, AI partially substitutes for skill, narrowing the gap between high- and low-skill workers. The boundary $\beta = 1$ is the skill-neutral case: AI provides a uniform net benefit across skill levels.

The parameter $\gamma > 0$ enforces diminishing marginal returns to AI usage, because the easiest tasks are delegated first and coordination costs rise as AI handles a larger share of the workflow. The steady-state loss result does not depend on this specific functional form, and Section~\ref{app:robustness} characterizes when it arises for any smooth production function in which both AI and skill raise current output.

\subsubsection*{Skill Dynamics}

Let $S_{i0}$ denote worker $i$'s skill at time $t=0$. Following the learning-forgetting formulation of \citet{GanuthulaSingh2026}, we assume that skill evolves according to
\begin{equation}
\frac{dS_{it}}{dt}
=
\underbrace{\kappa\big(1-u_{it}\big)\big(\bar S_i - S_{it}\big)}_{\text{learning from practice}}
-
\underbrace{\kappa\,u_{it}\,S_{it}}_{\text{forgetting from offloading}},
\label{eq:skill-dynamics-full}
\end{equation}
where $\bar S_i$ is worker $i$'s maximum potential and $\kappa>0$ is a common learning/forgetting rate. These dynamics capture the idea that human practice builds skill. Holding usage fixed at $u_{it}$, skill converges to $\bar S_i(1-u_{it})$.\footnote{In the limit $\kappa\to\infty$ skill adjusts instantly, $S_{it}\to\bar S_i(1-u_{it})$, and the model reduces to a static benchmark in which productivity depends on current usage alone.}

Collecting terms, the law of motion simplifies to
\begin{equation}
\frac{dS_{it}}{dt}
=
\kappa\,\bar S_i(1-u_{it}) - \kappa S_{it}.
\label{eq:skill-dynamics-simplified}
\end{equation}
Setting $dS_{it}/dt=0$ and solving for the steady state gives $\hat S_i = \bar S_i(1-\hat u_i)$, where $\hat S_i, \hat u_i$ denote the steady-state skill and AI usage. Holding usage constant, an analyst who uses AI half the time ($u_i=0.5$) converges to half their potential. The skill dynamics themselves do not depend on $\alpha$ or $\beta$. The task parameters affect skill outcomes indirectly, by changing the optimal usage level $u^*$ through the value function. For the remainder of this section, we suppress the worker index $i$ and time subscript $t$ whenever there is no ambiguity.

\subsection{Dynamic Learning and the Long-Run Effects of AI Usage}
\label{sec:dynamic}

\subsubsection*{Productivity and the Dynamic Program}

Recall the production function from Section~\ref{sec:static}. Per-period productivity with AI is
\begin{equation}
p(S_{it},u_{it})
=
\underbrace{(1-u_{it})\,S_{it}}_{\text{human contribution}}
+
\underbrace{\big[\alpha+\beta S_{it}-\gamma u_{it}\big]\,u_{it}}_{\text{productivity effect of AI usage}}.
\label{eq:ct-prod}
\end{equation}
The decision-maker chooses a usage policy $u_{it}\in[0,1]$ to maximize discounted output:
\begin{equation}
V(S_{i0})
=
\int_0^\infty e^{-\delta t}
\Big((1-u_{it})S_{it} + \big[\alpha + \beta S_{it} - \gamma u_{it}\big]u_{it}\Big)\,dt,
\label{eq:value-fn}
\end{equation}
where $\delta>0$ is the decision-maker's discount rate. Because the immediate payoff from AI precedes the skill costs, the discount rate controls the tradeoff between short-run productivity and long-run capability. We compare the steady-state against a no-AI benchmark $\bar S/\delta$, representing the steady state of working without AI. In the no-AI state, workers  gradually accumulate skill, eventually arriving at full skill $\bar S$.

We interpret $\delta$ as the discount rate of the decision-maker. In Sections~\ref{sec:dynamic} through~\ref{sec:region-map}, we study the aligned benchmark in which the decision-maker internalizes the productivity consequences of skill loss. This benchmark can be interpreted as a worker choosing their own offloading intensity or as a firm that fully values the worker's future skill. Section~\ref{sec:misaligned-objectives} relaxes this alignment and studies cases in which the decision-maker does not bear all long-run skill costs. Unless otherwise noted, adoption is evaluated from $S_{i0}=\bar S$, so the no-AI lifetime benchmark is $\bar S/\delta$. 

The Bellman equation for this problem is
\[
\delta V(S_{it})
=
\max_{u_{it}\in[0,1]}
\Big\{ (1-u_{it})S_{it} + (\alpha + \beta S_{it} - \gamma u_{it})u_{it}
+ V'(S_{it})\big[\kappa\bar S_i(1-u_{it}) - \kappa S_{it}\big]\Big\}.
\]

\paragraph{Quadratic value function and linear usage policy.}
For policies $0<u<1$, the first-order condition is
\begin{equation}
u^*(S_{it}) = \frac{\alpha+(\beta-1) S_{it} - \kappa\bar S_i\,V'(S_{it})}{2\gamma}.
\label{eq:foc-usage}
\end{equation}
In the expression of the optimal usage policy, the $(\beta-1)$ term is the net effect of AI on the skill-dependent component, equal to $\beta S$ gained through complementarity minus $S$ displaced from human contribution. Optimal usage balances this immediate return against the value of the future skill that is eroded by heavier usage, $V'(S_{t})$. Because productivity is quadratic in usage and the skill dynamics are linear, the value function takes a specific form:

\begin{lemma}[Quadratic value and linear usage policy]\label{lem:quad-value}
Fix $\alpha,\beta,\gamma,\kappa,\delta>0$ and $\bar S>0$, and suppose the optimal usage policy is not a corner solution $(0< u < 1)$. Then there exist constants $a,b,c$ such that the value function $V(S)=aS^2+bS+c$ is quadratic and the optimal usage policy $u^*(S)=u_0+u_1 S$ is linear in skill.
\end{lemma}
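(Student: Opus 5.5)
The plan is a guess-and-verify argument on the Hamilton--Jacobi--Bellman equation. This is the standard approach for linear--quadratic control problems: the dynamics in \eqref{eq:skill-dynamics-simplified} are affine in $(S,u)$ and the flow payoff \eqref{eq:ct-prod} is quadratic in $(S,u)$. I will posit $V(S)=aS^2+bS+c$, so $V'(S)=2aS+b$. Substituting into the interior first-order condition \eqref{eq:foc-usage} gives
\[
u^*(S)=\frac{\alpha-\kappa\bar S b}{2\gamma}+\frac{(\beta-1)-2\kappa\bar S a}{2\gamma}\,S .
\]
This is already linear in $S$, with $u_0=(\alpha-\kappa\bar S b)/(2\gamma)$ and $u_1=((\beta-1)-2\kappa\bar S a)/(2\gamma)$. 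What remains is to show that constants $a,b,c$ exist for which the Bellman equation holds identically in $S$.

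Next I substitute $u^*$ back into the maximized Bellman equation. Because the objective inside the max is concave in $u$ (its coefficient on $u^2$ is $-\gamma<0$), the maximized value at an interior point equals the value at $u=0$ plus $(\text{linear coefficient in }u)^2/(4\gamma)$. Concretely, with $L(S)=\alpha+(\beta-1)S-\kappa\bar S V'(S)$, the right-hand side becomes
\[
S+V'(S)\big[\kappa\bar S-\kappa S\big]+\frac{L(S)^2}{4\gamma}.
\]
Both $L$ and $V'$ are affine in $S$, so this expression is a quadratic polynomial in $S$, and so is the left-hand side $\delta(aS^2+bS+c)$. Matching the coefficients of $S^2$, $S$ and $1$ gives three equations in $(a,b,c)$:
\begin{align*}
\delta a &= -2\kappa a+\frac{\big((\beta-1)-2\kappa\bar S a\big)^2}{4\gamma},\\
\delta b &= 1+2\kappa\bar S a-\kappa b+\frac{2\big((\beta-1)-2\kappa\bar S a\big)\big(\alpha-\kappa\bar S b\big)}{4\gamma},\\
\delta c &= \kappa\bar S b+\frac{(\alpha-\kappa\bar S b)^2}{4\gamma}.
\end{align*}
The system is triangular. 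The first equation is a scalar Riccati (quadratic) equation in $a$ alone. Given $a$, the second equation is linear in $b$, and given $b$, the third determines $c$ explicitly.

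The main obstacle is the Riccati step: showing it has a real root and choosing the correct one. Writing $m=\beta-1$ and $k=\kappa\bar S$, the equation reads $k^2a^2-(m k+\gamma(\delta+2\kappa))a+m^2/4=0$, a quadratic in $a$ with positive leading coefficient and positive constant term. I would first check the discriminant, $(mk+\gamma(\delta+2\kappa))^2-k^2m^2=\gamma(\delta+2\kappa)\big(2mk+\gamma(\delta+2\kappa)\big)$. This is nonnegative whenever $2mk+\gamma(\delta+2\kappa)\ge 0$; that holds automatically for $\beta\ge1$, and for $\beta<1$ it is the condition under which the interior-solution hypothesis is coherent.

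I would then select the smaller root. It is the one that tends to the myopic solution as $\kappa\bar S\to0$, and it makes the closed-loop dynamics $\dot S=\kappa\bar S(1-u_0)-(\kappa+\kappa\bar S u_1)S$ stable, so that the transversality condition $e^{-\delta t}V(S_t)\to0$ holds along the optimal path. With that root fixed, the coefficient of $b$ in the second equation is $\delta+\kappa+\kappa\bar S u_1$, which is positive under the same stability condition, so $b$ is uniquely determined. The last step is a standard verification argument: $V$ is $C^1$, satisfies the HJB equation, and meets transversality, so $V$ is the value function and $u^*$ is optimal on the region where $0<u^*<1$. This is exactly the hypothesis of Lemma~\ref{lem:quad-value}.
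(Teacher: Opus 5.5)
Your proposal is correct and follows the paper's proof essentially line for line. Both conjecture $V(S)=aS^2+bS+c$, read off the linear policy from the first-order condition, and match the $S^2$, $S$, and constant coefficients of the HJB equation to obtain the same triangular system, with the smaller (stable) Riccati root for $a$ agreeing with the paper's closed form \eqref{eq:a-closed}; your extra steps (the discriminant condition, the observation that the $b$-equation has coefficient $\delta+\hat\kappa>0$, and the verification argument) fill in details the paper leaves implicit rather than changing the route.
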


The optimal policy is linear, so usage rises or falls with skill at the constant slope $u_1$. The intercept $u_0$ is the usage a worker with no skill would choose, the skill-independent gain net of the atrophy cost. The slope is the additional usage from skill complementarity against atrophy. The results that follow are comparative statics on these two numbers.

\subsection{Skill-Neutral AI and Steady-State Loss}
\label{sec:beta-zero}

Consider an AI tool that benefits a novice and a veteran equally, perhaps something like simple translation. This is the case $\beta=1$, when the skill-complementary effect exactly offsets the displaced human contribution, so that the net effect of AI on productivity does not depend on skill. By stripping out skill complementarity we can focus on the tension between immediate productivity and atrophy.

Under skill-neutral AI ($\beta=1$), the value function takes a linear form,
\begin{align}
V(S) &= bS + c, \qquad b = \frac{1}{\delta+\kappa},
\\[3pt]
u^* &= \frac{\alpha - \kappa \bar S/(\delta+\kappa)}{2\gamma}.
\label{eq:ustar-beta0}
\end{align}
Usage is positive only when the skill-neutral effect $\alpha$ exceeds an adoption threshold
$\alpha_{0} := \kappa \bar S/(\delta+\kappa)$.

When $u^*$ is substituted into the skill dynamics \eqref{eq:skill-dynamics-simplified}, we get a linear ordinary differential equation. Skill converges to a steady state $\hat S<\bar S$ when $u^*>0$. Comparing the resulting steady-state value to the no-AI benchmark $V^{\text{no-AI}}=\bar S/\delta$:

\begin{proposition}[Steady-state loss in the skill-neutral case]
\label{prop:trap-beta0}
Suppose $\beta=1$ and consider $\alpha<\alpha_{2}:=2\gamma+\kappa\bar S/(\delta+\kappa)$ so that the
optimal policy does not reach full automation. Define
\[
\alpha_{0} := \frac{\kappa\bar S}{\delta+\kappa},
\qquad
\alpha_{1} := \frac{(2\delta+\kappa)\bar S}{\delta+\kappa}.
\]
Then:
\begin{enumerate}\setlength{\itemsep}{2pt}
\item If $\alpha\le \alpha_{0}$, AI is not adopted ($u^*=0$) and $\hat{V}=\bar S/\delta$.
\item If $\alpha_{0}<\alpha< \alpha_{1}$, the optimal policy has $0<u^*<1$ and
raises current productivity but lowers long-run value below the
no-AI benchmark, $V(\hat S)<\bar S/\delta$. At $\alpha=\alpha_1$ adoption breaks even, with $V(\hat S)=\bar S/\delta$.
\item If $\alpha>\alpha_{1}$, AI adoption ($0<u^*<1$) improves both short-run
productivity and long-run value: $V(\hat S)>\bar S/\delta$.
\end{enumerate}
\end{proposition}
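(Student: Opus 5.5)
The plan is to verify the linear value function directly and then compute the steady-state value in closed form. With $\beta=1$, per-period output is $p(S,u)=S+(\alpha-\gamma u)u$, so the skill-dependent part no longer interacts with $u$.

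First, I will guess $V(S)=bS+c$ and substitute it into the Bellman equation. Matching the coefficient of $S$ gives $\delta b = 1-\kappa b$, hence $b=1/(\delta+\kappa)$. The maximand in $u$ is then $\alpha u-\gamma u^2-\kappa\bar S b\,u$ plus terms free of $u$. It is strictly concave, so its maximizer is the constant
\[
u^*=\frac{\alpha-\alpha_0}{2\gamma},
\]
projected onto $[0,1]$. This gives $u^*=0$ exactly when $\alpha\le\alpha_0$, and $u^*<1$ exactly when $\alpha<\alpha_2$, which delivers the interior range.

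Because the objective is linear in the state and concave in the control, with linear dynamics, this verification also establishes optimality. I will use the standard sufficiency argument for concave problems, or simply note that the problem separates: $\int e^{-\delta t}S_t\,dt$ is linear in the path of $u$ via the variation-of-constants formula. Matching constants then gives
\[
\delta c=(\alpha-\gamma u^*)u^*+\kappa\bar S b(1-u^*).
\]
Case 1 is then immediate: $u^*=0$ gives $\hat S=\bar S$, $c=\kappa\bar S/(\delta(\delta+\kappa))$, and $V(\bar S)=\bar S/(\delta+\kappa)+c=\bar S/\delta$.

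For cases 2 and 3, I set $\hat S=\bar S(1-u^*)$ and compute
\[
\delta V(\hat S)=\delta b\bar S(1-u^*)+(\alpha-\gamma u^*)u^*+\kappa b\bar S(1-u^*)=\bar S(1-u^*)+(\alpha-\gamma u^*)u^*,
\]
using $(\delta+\kappa)b=1$. So $\delta V(\hat S)$ equals steady-state flow output, as expected. Comparing it with $\bar S$ gives
\[
\delta V(\hat S)-\bar S=u^*(\alpha-\bar S-\gamma u^*).
\]
For $u^*>0$ its sign is that of $\alpha-\bar S-\gamma u^*$. Substituting $\gamma u^*=(\alpha-\alpha_0)/2$, this becomes $\tfrac12(\alpha+\alpha_0)-\bar S$. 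It vanishes iff $\alpha=2\bar S-\alpha_0=(2\delta+\kappa)\bar S/(\delta+\kappa)=\alpha_1$, is negative for $\alpha<\alpha_1$, and is positive for $\alpha>\alpha_1$. This yields the loss, break-even and gain conclusions.

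The short-run claim is that current productivity rises at $S=\bar S$ when $u^*>0$. It holds because the maximizer satisfies $\alpha-\gamma u^*=(\alpha+\alpha_0)/2>0$, so $p(\bar S,u^*)-\bar S=u^*(\alpha-\gamma u^*)>0$.

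Two remaining details need checking. One is that $\alpha_0<\alpha_1$, which holds since $\delta>0$. The other is whether $\alpha_1<\alpha_2$. If not, case 3 is only meaningful on $(\alpha_1,\alpha_2)$, and I will state it with the standing restriction $\alpha<\alpha_2$. The main obstacle is interpretive rather than computational: the statement compares $V(\hat S)$, the value of the continuation from the steady state, with $\bar S/\delta$. I will make explicit that this is the right object and that it reduces to steady-state flow output divided by $\delta$, as computed above.
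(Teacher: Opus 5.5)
Your proposal is correct and follows the paper's route. You solve the HJB with a linear value function ($b=1/(\delta+\kappa)$), obtain the constant policy $u^*=(\alpha-\alpha_0)/(2\gamma)$, and compare $V(\hat S)=p(\hat S,u^*)/\delta$ with $\bar S/\delta$, supplying the algebra the paper's two-line proof leaves implicit: reducing the sign comparison to $\tfrac12(\alpha+\alpha_0)-\bar S$, and checking, via the separable variation-of-constants form, that the linear guess is actually optimal.
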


\begin{proof}[Proof of Proposition~\ref{prop:trap-beta0}]
Part 1 follows from the expression of the optimal AI usage policy. As the value function is linear with and without AI in this case, parts 2 and 3 follow by comparing the steady state value function.
\end{proof}

The region $\alpha\in(\alpha_{0},\alpha_{1}]$ produces steady-state loss. Adoption pays off in the short run because the productivity boost outweighs the discounted skill loss. 

Increasing $\delta$ expands the loss region,\footnote{Figures~\ref{fig:bzero}, \ref{fig:usage-skill}, and \ref{fig:trap-welfare} use $\gamma=1$ so that optimal policies are interior at moderate usage levels; Figure~\ref{fig:bzero} additionally requires $\gamma>\gamma^*$ (Remark~\ref{rem:alpha-order}). The region maps use the baseline $\gamma=0.1$.} since impatience decreases the value of steady-state output. Steady-state loss persists even with full knowledge of skill atrophy. Figure~\ref{fig:bzero} illustrates this pattern by plotting
$\Delta V\equiv V(\hat S) - \bar S/\delta$ against $\alpha$ for $\beta=1$. Adoption raises short-run productivity but lowers steady-state value in the intermediate $\alpha$ range, and only sufficiently large $\alpha$ improves long-run productivity.

\emph{Steady-state loss} compares long-run outcomes. After skill and usage converge, the AI path yields lower flow output than the no-AI benchmark. In Region~II the worker accepts the lower steady state as a rational decision because the transition surplus compensates for the reduced long-run output, so this is not a welfare loss. A worker may still prefer adoption if the early productivity gains compensate for the lower steady state. We use the term \emph{augmentation trap} for the stronger case in which they do not: the worker appears augmented in the short run, but the equilibrium usage policy erodes skill enough that the worker's lifetime welfare falls below what it would have been without adoption. Section~\ref{sec:misaligned-objectives} shows how this can occur when the decision-maker does not internalize the worker's long-run value of skill. 

\begin{proposition}[Steady-state loss under full automation]
\label{prop:auto-trap-beta0}
Suppose $\beta=1$ and consider $\alpha\ge\alpha_{2}:=2\gamma+\kappa\bar S/(\delta+\kappa)$ so that the
optimal policy reaches full automation. Define
\[
\alpha_{3} := \gamma+\bar S.
\]
Then:
\begin{enumerate}\setlength{\itemsep}{2pt}
\item If $\alpha_{2}\le \alpha<\alpha_{3}$, the optimal policy has $u^*=1$. Skill erodes to zero and AI
raises current productivity but yields a lower long-run value than the
no-AI benchmark: $V(\hat S)<\bar S/\delta$. At $\alpha=\alpha_3$, full automation breaks even: $V(\hat S)=\bar S/\delta$.
\item If $\alpha>\max\{\alpha_{2},\alpha_{3}\}$, full automation ($u^*=1$) improves both short-run flow
productivity and long-run value: $V(\hat S)>\bar S/\delta$.
\end{enumerate}
\end{proposition}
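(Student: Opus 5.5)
With $\beta=1$ the problem is linear in $S$ apart from the usage term, so I would mirror the proof of Proposition~\ref{prop:trap-beta0}. The Bellman equation is
\[
\delta V = \max_u \{S + (\alpha-\gamma u)u - u S\cdot 0 + V'(S)[\kappa\bar S(1-u)-\kappa S]\}
\]
(note $(1-u)S+\beta S u = S$ when $\beta=1$). I would guess $V(S)=bS+c$. Matching the $S$ coefficient gives $\delta b = 1-\kappa b$, so $b=1/(\delta+\kappa)$. This is independent of the policy, which is the key simplification. The maximand in $u$ is then the concave quadratic $(\alpha-\kappa\bar S b)u-\gamma u^2$. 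Its unconstrained maximizer $(\alpha-\alpha_0)/(2\gamma)$ reaches or exceeds $1$ exactly when $\alpha\ge 2\gamma+\alpha_0=\alpha_2$. Since $u$ does not depend on $S$, the constrained optimum is the corner $u^*=1$ for every $S$. This verifies the first claim of part 1, and I would note that the verification argument (linear guess plus a strictly concave pointwise maximization) also covers the corner case, which Lemma~\ref{lem:quad-value} excludes.

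Next I would compute the steady state. With $u^*=1$ the skill dynamics reduce to $\dot S=-\kappa S$, so $S_t\to\hat S=0$ and skill erodes to zero. Flow output at $u=1$ equals $\alpha+\beta S-\gamma = \alpha-\gamma+S$, so at $\hat S=0$ the steady-state flow is $\alpha-\gamma$. Following the paper's convention of comparing steady states against the no-AI benchmark $\bar S/\delta$, I would take the steady-state value to be $V(\hat S)=(\alpha-\gamma)/\delta$, the discounted value of staying at $\hat S=0$ with $u=1$ forever. Equivalently, from the Bellman constant term, $\delta c=\alpha-\gamma+\kappa\bar S b\cdot 0$ and $V(0)=c=(\alpha-\gamma)/\delta$. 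The comparison $V(\hat S)\gtrless\bar S/\delta$ then reduces to $\alpha\gtrless\gamma+\bar S=\alpha_3$. This gives loss for $\alpha_2\le\alpha<\alpha_3$, break-even at $\alpha=\alpha_3$, and gain for $\alpha>\max\{\alpha_2,\alpha_3\}$.

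Finally, for the short-run claims, I would evaluate current flow at $S=\bar S$. Under $u=1$ it is $\alpha-\gamma+\bar S$, and under no AI it is $\bar S$. The difference is $\alpha-\gamma$, which is positive because $\alpha\ge\alpha_2>2\gamma$. So current productivity rises in both parts.

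The main obstacle is interpretive rather than computational: being precise that "$V(\hat S)$" means the value at the steady state, not the lifetime value from $\bar S$. The lifetime value $V(\bar S)=b\bar S+c$ exceeds $V(\hat S)$, and using it would shift the threshold. I would state this convention explicitly, consistent with Proposition~\ref{prop:trap-beta0}. I would also remark that the interval $[\alpha_2,\alpha_3)$ is nonempty only when $\gamma+\kappa\bar S/(\delta+\kappa)<\bar S$, i.e.\ $\gamma<\delta\bar S/(\delta+\kappa)$; otherwise part 1 is vacuous. This ties in with the ordering discussion in Remark~\ref{rem:alpha-order}.
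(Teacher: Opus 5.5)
Your proposal is correct and follows the same route as the paper: under $u^*=1$ the law of motion reduces to $\dot S=-\kappa S$, so $\hat S=0$, and comparing the steady-state flow $\alpha-\gamma$ with $\bar S$ gives the threshold $\alpha_3=\gamma+\bar S$. You are more explicit than the paper's sketch in two places. You derive $u^*=1$ from the linear value function $V(S)=S/(\delta+\kappa)+(\alpha-\gamma)/\delta$, which covers the corner case excluded by Lemma~\ref{lem:quad-value}. You also check the short-run gain directly as $\alpha-\gamma>0$, whereas the paper infers it from the optimality of adopting despite a lower steady state.
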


\begin{proof}[Proof of Proposition~\ref{prop:auto-trap-beta0}]
Part 1: The result that skill erodes to zero at full automation follows from the law of motion for skill acquisition once the worker stops practicing. That long-run productivity is lower holds by comparing the long-run productivity with and without AI given the conditions. Short run productivity must be higher as it is optimal for the worker to adopt AI with lower long-run productivity. Similarly, part 2 follows by comparing the steady state production function with and without AI.
\end{proof}

\begin{remark}[Ordering of the thresholds]\label{rem:alpha-order}
Let $\gamma^* := \delta\bar S/(\delta+\kappa)$. Since $\alpha_2-\alpha_1 = 2(\gamma-\gamma^*)$ and $\alpha_3-\alpha_2 = \gamma^*-\gamma$, the threshold ordering depends on whether $\gamma$ exceeds $\gamma^*$. If it does, then $\alpha_1<\alpha_2$ and $\alpha_3<\alpha_2$, so Proposition~\ref{prop:trap-beta0} has a nonempty interior gain region and full automation always improves on the no-AI benchmark. If $\gamma<\gamma^*$, part 3 of Proposition~\ref{prop:trap-beta0} is vacuous and long-run gains require full automation ($\alpha>\alpha_3$, Proposition~\ref{prop:auto-trap-beta0}).
\end{remark}
\begin{figure}[htbp]
  \centering
  \includegraphics[width=0.8\linewidth]{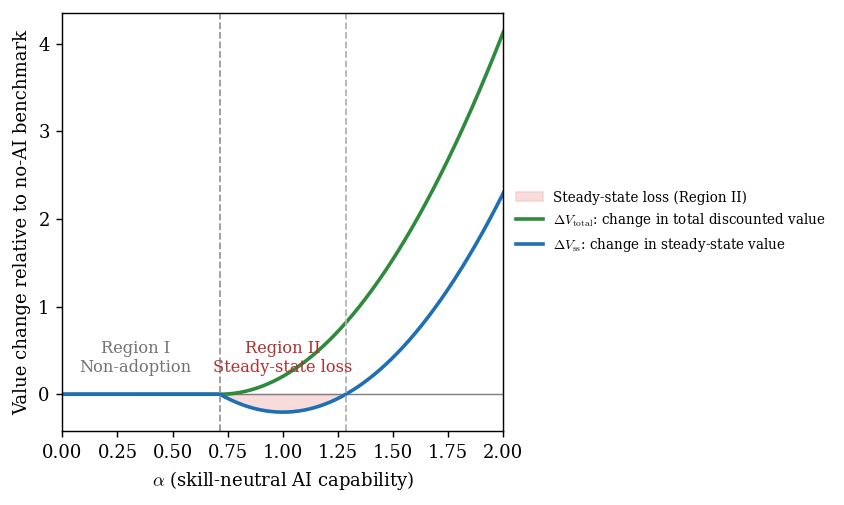}
  \caption{\textbf{Adoption surplus and steady-state loss.}
In the skill-neutral case $(\beta=1)$, the green curve plots the decision-maker's adoption surplus, $V(\bar S)-\bar S/\delta$, and the blue curve plots the steady-state value change, $V(\hat S)-\bar S/\delta$. For $\alpha_0<\alpha<\alpha_1$, adoption is privately valuable but the steady-state value is below the no-AI benchmark. Parameters: $\gamma=1$, $\kappa=0.25$, $\delta=0.1$, $\bar S=1$.}
  \label{fig:bzero}
\end{figure}

\subsection{Skill Complementarity: Usage Increasing in Skill}

The slope of the usage policy in skill determines the feedback between skill and usage. The feedback decides whether workers converge to a common steady state or separate (Section~\ref{sec:stratification}).

For $\beta>1$, AI complements skill because the skill-complementary effect more than offsets the displaced human contribution, amplifying the productivity difference between high- and low-skill workers. The optimal policy takes the linear form
\begin{equation}
u^*(S) = u_0 + u_1 S
= \frac{\alpha+(\beta-1-2\kappa a\bar S)S-\kappa b\bar S}{2\gamma},
\label{eq:ustar-beta-pos}
\end{equation}
where $a,b$ are the coefficients of the quadratic value function in Lemma~\ref{lem:quad-value} and
\[
u_0\equiv \frac{\alpha-\kappa b\bar S}{2\gamma},
\]
\[
 u_1\equiv \frac{(\beta-1-2\kappa a\bar S)}{2\gamma}
\]

Substituting \eqref{eq:ustar-beta-pos} into the skill dynamics
\eqref{eq:skill-dynamics-simplified} gives
\begin{equation}
\frac{dS}{dt}
= \kappa\big(\bar S(1-u^*(S)) - S\big)
= \hat\kappa\big(\hat S - S\big),
\label{eq:skill-dynamics-hat}
\end{equation}
with an effective learning rate and steady state
\begin{align}
\hat\kappa &= \kappa\big(1+u_1\bar S\big), \\
\hat S &= \frac{\bar S(1-u_0)}{1+u_1\bar S}. \label{eq:Shat}
\end{align}
Skill converges exponentially to $\hat S$, and usage converges to
\begin{equation}
\hat u
= u^*(\hat S)
= u_0 + u_1\hat S
= \frac{u_0+u_1\bar S}{1+u_1\bar S}.
\end{equation}

With complementarity, higher-skill workers optimally use more AI:

\begin{proposition}[Usage is increasing in skill when $\beta>1$]
\label{prop:u-increasing}
If $\beta>1$ and an interior policy is optimal, the slope of the usage policy is strictly
positive: $u_1>0$ in \eqref{eq:ustar-beta-pos}. Hence $u^*(S)$ is strictly increasing in $S$.
\end{proposition}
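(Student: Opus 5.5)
The plan is to determine the sign of $u_1$ by pinning down the value-function coefficient $a$ from the Hamilton--Jacobi--Bellman equation, using Lemma~\ref{lem:quad-value}. Write $V(S)=aS^2+bS+c$, so $V'(S)=2aS+b$. At an interior optimum the maximand is a concave quadratic in $u$. Substituting the first-order condition \eqref{eq:foc-usage} turns the Bellman equation into
\[
\delta V(S)=S+\frac{A(S)^2}{4\gamma}+\kappa V'(S)\big(\bar S-S\big),
\]
where $A(S):=\alpha+(\beta-1)S-\kappa\bar S V'(S)=2\gamma u^*(S)$. By \eqref{eq:ustar-beta-pos}, $A(S)=2\gamma(u_0+u_1S)$.

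First I would match the coefficients of $S^2$ on both sides. The left side contributes $\delta a$. On the right, $A^2/(4\gamma)$ contributes $\gamma u_1^2$, and the drift term $\kappa(2aS+b)(\bar S-S)$ contributes $-2\kappa a$. This gives
\[
a=\frac{\gamma u_1^2}{\delta+2\kappa}\ \ge 0 .
\]
Substituting into $2\gamma u_1=\beta-1-2\kappa a\bar S$ and setting $x:=2\gamma u_1$ yields a scalar quadratic:
\[
\frac{\kappa\bar S}{2\gamma(\delta+2\kappa)}\,x^2+x-(\beta-1)=0 .
\]
The leading coefficient is positive and the constant term is $-(\beta-1)<0$ when $\beta>1$. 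So the product of the roots is negative and there is exactly one positive and one negative root. The coefficients $b$ and $c$ come from the linear and constant terms, but they are not needed for the sign of $u_1$.

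The main obstacle is showing that the value function is the positive root rather than the negative one, since the Riccati-type equation alone admits both. My first approach is a continuity argument in $\beta$. The discriminant $1+4\frac{\kappa\bar S}{2\gamma(\delta+2\kappa)}(\beta-1)$ is strictly positive for all $\beta\ge1$, so the two root branches never meet and each depends continuously on $\beta$. At $\beta=1$ the paper has already established that the value function is linear, $a=0$, which corresponds to the root $x=0$. The other root at $\beta=1$ is strictly negative and would give $a>0$, which contradicts that characterization. By continuity, the true value function stays on the branch that starts at $x=0$, and for $\beta>1$ that branch is the positive root.

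As a backup, or as reinforcement, I would use a verification/transversality argument. The negative root makes $a$ strictly larger, namely $a>(\beta-1)/(2\kappa\bar S)$. I would then check that this branch fails the standard verification condition: either it overstates the value of some feasible policy, or the closed-loop dynamics with effective rate $\hat\kappa=\kappa(1+u_1\bar S)$ violate $\lim_{t\to\infty}e^{-\delta t}V(S_t)=0$ or leave the interior region. Either route selects $x>0$, so $u_1=x/(2\gamma)>0$, and hence $u^*(S)=u_0+u_1S$ is strictly increasing in $S$.
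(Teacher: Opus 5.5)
Your argument is the paper's proof in different coordinates. Your quadratic in $x=2\gamma u_1$ is the $S^2$ matching condition \eqref{eq:coeff-a}. Its positive root is exactly $x=(\sqrt D-\gamma(\delta+2\kappa))/(\kappa\bar S)$, which is what the paper's closed form \eqref{eq:a-closed} gives (the ``stable'' root selected in the proof of Lemma~\ref{lem:quad-value}). Your product-of-roots sign argument is equivalent to the paper's inequality $\sqrt D>\gamma(\delta+2\kappa)$, obtained from the factorization of $D$.

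The paper simply takes that root as given. Of your two justifications, the backup is the one that matches it, and it closes in one line: an interior policy on $[0,\bar S]$ forces $1+u_1\bar S=1-u^*(0)+u^*(\bar S)>0$, while the negative root gives $1+u_1\bar S=-(\delta+\sqrt D/\gamma)/(2\kappa)<0$. Your continuity-in-$\beta$ argument, by contrast, tacitly needs the interior quadratic characterization at every intermediate $\beta$, which the hypothesis at a single $\beta$ does not supply.
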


Two forces operate on skilled workers. Holding potential $\bar S$ fixed, a worker with higher current skill $S$ uses more AI, because complementarity makes each unit of AI usage more productive. But holding current skill fixed, a worker with higher potential $\bar S$ uses less AI, because they have further to fall from atrophy.\footnote{Remark~\ref{rem:usage-potential} in the Appendix proves this comparative static for $\beta>1$.} High-$\beta$ deployments are more likely to improve productivity, because skill-AI complementarity makes usage productive enough to discourage atrophy.

\subsection{Skill Leveling: Usage Decreasing in Skill}

When $\beta < 1$, the skill-complementary effect does not fully compensate for the displaced human contribution, so AI partially substitutes for skill. Low-skill workers now gain more from AI at the margin, so they adopt it more heavily. High-skill workers gain less and use it less. In the short run this narrows the productivity gap.

\begin{proposition}[Usage is decreasing in skill when $\beta<1$]
\label{prop:u-decreasing}
If $\beta<1$, the discriminant $D>0$,\footnote{The discriminant $D = [\kappa((\beta-1)\bar S+2\gamma)+\gamma\delta]^2 - \kappa^2(\beta-1)^2\bar S^2$ is positive when $2\kappa(1-\beta)\bar S < \gamma(\delta+2\kappa)$, that is, unless AI substitutes strongly for skill ($\beta$ well below $1$). It holds for the baseline parameterizations used in the figures below. When it fails, the interior classification no longer applies and the bounded-usage dynamics of Section~\ref{sec:stratification} take over.} and an interior policy is optimal, the slope of the usage policy is strictly
negative: $u_1<0$ in \eqref{eq:ustar-beta-pos}. Hence $u^*(S)$ is strictly decreasing in $S$.
\end{proposition}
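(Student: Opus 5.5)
The plan is to get the slope $u_1$ directly from the Hamilton--Jacobi--Bellman equation. I will reduce it to a scalar quadratic and read off the sign of its roots from Vieta's formulas, without computing them explicitly.

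\textbf{Reducing the HJB.} By Lemma~\ref{lem:quad-value}, on the interior region $V(S)=aS^2+bS+c$ and $V'(S)=2aS+b$. Write $X(S):=\alpha+(\beta-1)S-\kappa\bar S V'(S)$, so that \eqref{eq:foc-usage} reads $u^*=X/(2\gamma)$. Substituting the maximizer back into the Bellman equation, the usage-dependent part of the flow is $uX-\gamma u^2$, which at the optimum equals $X^2/(4\gamma)$. The HJB therefore becomes
\[
\delta V(S)= S+\frac{X(S)^2}{4\gamma}+V'(S)\,\kappa(\bar S-S).
\]
With $V'$ linear, $X(S)=(\alpha-\kappa\bar S b)+mS$, where
\[
m:=\beta-1-2\kappa\bar S a = 2\gamma u_1 .
\]
So the sign of $u_1$ is the sign of $m$.

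\textbf{Equation for $m$.} Matching the $S^2$ coefficients gives $\delta a = m^2/(4\gamma)-2\kappa a$. Eliminating $a=(\beta-1-m)/(2\kappa\bar S)$ and multiplying through by $4\gamma\kappa\bar S$ yields
\[
\kappa\bar S\,m^2+2\gamma(\delta+2\kappa)\,m-2\gamma(\delta+2\kappa)(\beta-1)=0 .
\]
This is the same Riccati condition as the paper's quadratic in $a$, rewritten in terms of $m$. Its discriminant is a positive multiple of the $D$ in the footnote, and I will verify this identity by direct expansion. Hence the hypothesis $D>0$ is exactly what guarantees real roots, and thus that the quadratic value function of Lemma~\ref{lem:quad-value} exists.

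\textbf{Sign of the roots.} Apply Vieta's formulas. The sum of the roots is $-2\gamma(\delta+2\kappa)/(\kappa\bar S)<0$. The product is $-2\gamma(\delta+2\kappa)(\beta-1)/(\kappa\bar S)$, which is strictly positive when $\beta<1$. Two real numbers with positive product and negative sum are both strictly negative. So whichever root is selected as the value function, $m<0$, and hence $u_1=m/(2\gamma)<0$. It follows that $u^*(S)=u_0+u_1S$ is strictly decreasing in $S$ wherever the policy is interior.

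\textbf{Root selection (expected obstacle).} The step I expect to need the most care is justifying that the value function corresponds to one of these real roots at all. Normally one picks the root that satisfies transversality, or equivalently gives stable dynamics, $\hat\kappa=\kappa(1+u_1\bar S)>-\delta/2$ or similar. The sign argument sidesteps the selection question, because both roots are negative. I would therefore only remark that $D>0$ ensures existence. I would also note that the $S^1$ and $S^0$ coefficients, which determine $b$ and $c$, play no role in the sign of $u_1$.
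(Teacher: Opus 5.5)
Your proof is correct, and it takes a slightly different route from the paper's. The paper uses the closed form of the selected (``stable'') root $a$ to compute $(\beta-1)-2\kappa a\bar S=[\sqrt D-\gamma(\delta+2\kappa)]/(\kappa\bar S)$. It then uses $\beta<1$ and $D>0$ to get $0<\sqrt D<\gamma(\delta+2\kappa)$.

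Your reparametrization in $m$ turns the Riccati condition into $\kappa\bar S m^2+2\gamma(\delta+2\kappa)m-2\gamma(\delta+2\kappa)(\beta-1)=0$, whose roots are $[-\gamma(\delta+2\kappa)\pm\sqrt D]/(\kappa\bar S)$. The paper's expression is the larger of these, and your Vieta argument shows that both are negative.

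\textbf{What your route buys.} It is independent of root selection: $u_1<0$ holds for either solution of the $S^2$ coefficient equation, so you never need to justify which root is the value function. It also works at $D=0$, where the repeated root is still negative, so $D\ge 0$ suffices rather than $D>0$.

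\textbf{What the paper's route buys.} It gives the explicit value of $u_1$, which is reused later, for example in the identity $\delta+2\hat\kappa=\sqrt D/\gamma$ in the proof of Proposition~\ref{prop:overuse}. It also treats $\beta>1$ with the same computation. In that case the product of your roots is negative, so the roots have opposite signs. Vieta alone then cannot fix the sign of $u_1$, and the stable-root selection becomes essential.
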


\begin{figure}[t!]
\centering
\includegraphics[width=0.85\linewidth]{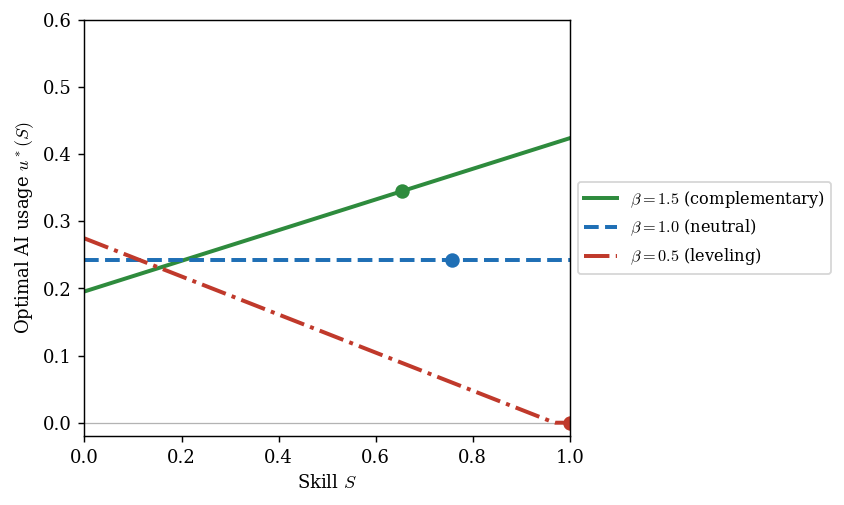}
\caption{\textbf{Optimal AI usage as a function of skill for the three complementarity regimes.}
When $\beta>1$, higher-skill workers use AI more; when $\beta<1$, lower-skill workers use AI more;
when $\beta=1$, usage is flat. Dots mark steady states. Parameters: $\alpha=1.2$, $\gamma=1.0$, $\kappa=0.25$, $\delta=0.1$, $\bar S=1$. Curves shown for $\beta=1.5$ (complementary), $\beta=1.0$ (neutral), and $\beta=0.5$ (leveling).}
\label{fig:usage-skill}
\end{figure}

Figure~\ref{fig:usage-skill} summarizes the three regimes. The sign of $u_1$ determines how the skill dynamics unfold. Two forces compete. Skill reverts toward its steady state, which pulls workers together, while the feedback between skill and usage can push them apart. When $\beta>1$, the feedback reinforces this reversion. A high-skill worker who uses AI heavily loses skill, so their usage falls and skill recovers. Similarly, a low-skill worker who avoids AI builds skill, which then raises their usage. These dynamics imply that all workers converge to a common steady state.

When $\beta<1$, the feedback instead opposes reversion. When the feedback dominates, low-skill workers are driven to full automation and high-skill workers to none, and the workforce splits. We expand on this in Section~\ref{sec:stratification}.

\FloatBarrier
\subsection{Steady-State Region Map}\label{sec:region-map}

In steady state, skill and usage settle at $(\hat S,\hat u)$, with productivity
\begin{equation}
y^\star
= \bar S(1-\hat u)^2 + \alpha\hat u - \gamma\hat u^2
+ \beta\bar S(1-\hat u)\hat u.
\label{eq:y-steady}
\end{equation}

Define
\[
\Delta y^{\mathrm{ss}} \;\equiv\; y^\star - \bar S
\]
as the steady-state output effect of the deployment. Because the policy holds skill fixed at a steady state, the value there is $y^\star/\delta$, so $\Delta y^{\mathrm{ss}}$ also signs the long-run value comparison $V(\hat S)-\bar S/\delta$. The locus $B:=\{(\alpha,\beta):\Delta y^{\mathrm{ss}}=0\}$ partitions the parameter
space into long-run output-gain and output-loss regions.

Figure~\ref{fig:nomathheatmap} partitions the parameter space into five regions according to the tool's long-run effect on productivity, separated by four boundaries:
\begin{itemize}\setlength{\itemsep}{2pt}
\item $C_0:=\{(\alpha,\beta):\hat u=0\}$, below which non-adoption is optimal;
\item $C_1:=\{(\alpha,\beta):\hat u=1\}$, above which full automation is optimal;
\item $B:=\{(\alpha,\beta):\Delta y^{\mathrm{ss}}=0\}$, where long-run output breaks even;
\item $F:=\{(\alpha,\beta):\alpha - \gamma = \bar S\}$, to the right of which full automation beats the no-AI baseline.
\end{itemize}

\begin{figure}[t!]
\centering
\includegraphics[width=0.75\linewidth]{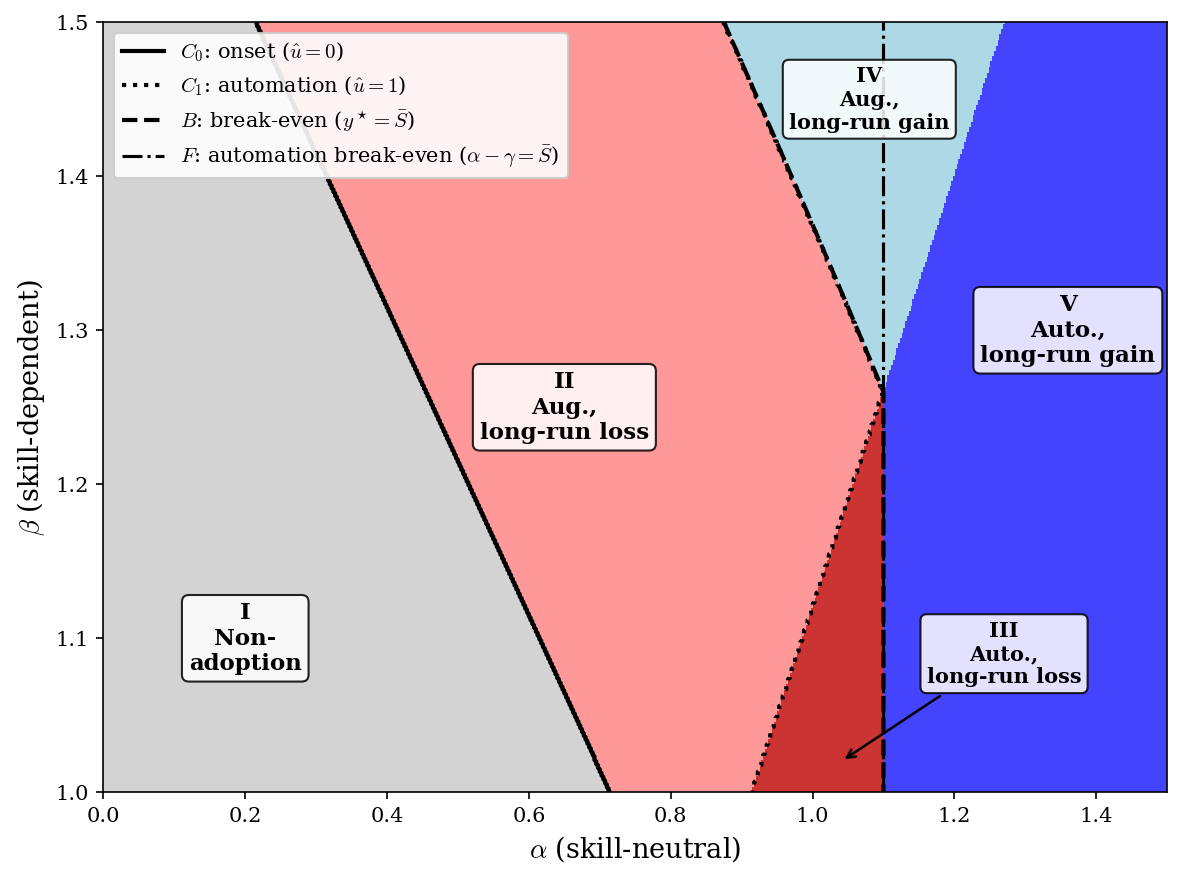}
\caption{\textbf{Five-region map.} The figure partitions the $(\alpha,\beta)$ parameter space by adoption, automation, and long-run output effects. Region II is the steady-state loss region: adoption is rational but steady-state output is below the no-AI benchmark. Parameters: $\gamma=0.10$, $\kappa=0.25$, $\delta=0.1$, $\bar S=1$.}
\label{fig:nomathheatmap}
\end{figure}

Moving right along the horizontal axis increases the skill-independent gain $\alpha$, and moving up the vertical axis increases the skill-dependent gain $\beta$. As these gains rise, usage approaches $\hat u=1$, full automation. The red regions mark the parameter combinations where using the tool lowers long-run productivity. The break-even boundary $B$, which separates the loss and gain regions under adoption, matters most for policy. Figure~\ref{fig:region-magnitude} in the Appendix shades the same map by the magnitude of the long-run effect.

\FloatBarrier

\subsection{Misaligned Incentives and the Augmentation Trap}\label{sec:misaligned-objectives}

So far, we have assumed that the person deciding to use AI is the same person doing the work. However, skill erosion becomes a problem when decision-makers mandating AI use are not the one paying the price in lost expertise. These conflicting interests can arise between shareholders and managers, between managers and employees, or even when a worker relies too heavily on AI to hit a short-term target, sacrificing their own long-term development. To keep things simple, we focus on the manager-worker dynamic. This covers scenarios where a company directly mandates AI, sets quotas so high that workers are forced to rely on it, or builds systems where delegating to AI is  the easiest option.

To formalize the idea, let the firm choose usage to maximize discounted value with discount rate $\delta_F$, while the worker's privately optimal usage corresponds to $\delta_W$, with $\delta_F>\delta_W$. The interior first-order condition gives
\[
u^*(S;\delta)=\frac{\alpha+(\beta-1) S-\kappa\bar S\,V_S(S;\delta)}{2\gamma},
\]
where $V_S(S;\delta)$ is the marginal continuation value of skill. Higher discounting reduces $V_S$, so
\[
V_S(S;\delta_F) < V_S(S;\delta_W)
\quad\Rightarrow\quad
u^*(S;\delta_F) > u^*(S;\delta_W)
\quad\text{(for interior solutions).}
\]
Since output is constant at a steady state, the value of remaining there is $p(\hat S,\hat u)/\delta$ whatever discount rate is used to evaluate it. Steady-state loss under a given policy therefore means that steady-state output falls below the no-AI level, $p(\hat S,\hat u)<\bar S$; the discount rate matters only through the policy it induces.

\begin{proposition}[Overuse under short-termism]
\label{prop:overuse}
Suppose $\delta_F>\delta_W$, and both the firm and the worker face interior optimal policies with stable steady states ($1+u_1\bar S>0$). Then $u^*(S;\delta_F)>u^*(S;\delta_W)$ for all $S$ in the interior region. The firm's steady-state skill is strictly lower, and the set of $(\alpha,\beta)$ pairs for which the worker experiences steady-state loss under the firm's policy is strictly larger than under the worker's own policy (Lemma~\ref{lem:loss-region}).
\end{proposition}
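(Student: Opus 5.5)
The plan is to reduce everything to a single monotonicity fact: the marginal continuation value of skill, $V_S(S;\delta)$, is strictly decreasing in $\delta$. Overuse then follows from the interior first-order condition. The steady-state comparisons follow from the closed forms \eqref{eq:Shat} and \eqref{eq:y-steady}.

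\textbf{Step 1: $V_S$ is decreasing in $\delta$.} Using Lemma~\ref{lem:quad-value}, I would substitute the maximized Hamiltonian into the Bellman equation. At an interior optimum the $u$-terms collapse to $(\alpha+(\beta-1)S-\kappa\bar S V')^2/(4\gamma)$. With $V'=2aS+b$, matching coefficients gives two equations:
\[
(\delta+2\kappa)a=\frac{(\beta-1-2\kappa\bar S a)^2}{4\gamma},
\]
a Riccati equation for $a$ (take the root giving $1+u_1\bar S>0$), and a linear equation for $b$ given $a$. I would then cross-check monotonicity with an envelope representation. Along the optimal path $\partial \dot S/\partial S=-\kappa$, so
\[
V_S(S_0;\delta)=\int_0^\infty e^{-(\delta+\kappa)t}\big(1+(\beta-1)u_t\big)\,dt,
\]
where the integrand is positive because $u_t\in[0,1]$. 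The difficulty is that the path $u_t$ itself depends on $\delta$. So I would not argue pathwise. Instead I would implicitly differentiate the Riccati and linear equations in $\delta$ and sign $\partial a/\partial\delta$ and $\partial b/\partial\delta$ using the stable root. The goal is $\partial(2aS+b)/\partial\delta<0$ on the interior region.

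\textbf{Step 2: overuse and lower skill.} Given Step 1, the first-order condition yields $u^*(S;\delta_F)>u^*(S;\delta_W)$ pointwise on the interior region. For steady states, I would avoid differentiating \eqref{eq:Shat} directly. Steady states satisfy $\hat S=\bar S(1-\hat u)$, which is the fixed point of $S\mapsto \bar S(1-u^*(S;\delta))$. Under stability ($1+u_1\bar S>0$) the map $S-\bar S(1-u^*(S;\delta))$ is strictly increasing in $S$. Raising $u^*$ pointwise shifts this map up, so its zero moves left. Hence $\hat S_F<\hat S_W$ and $\hat u_F>\hat u_W$.

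\textbf{Step 3: the loss region strictly expands.} Steady-state output is $y^\star(\hat u)$ as in \eqref{eq:y-steady}. It depends on $\delta$ only through $\hat u$, as the text preceding the proposition notes. The argument is golden-rule style. As $\delta\to0$ the steady state maximizes $y^\star$ over steady states. A positive discount rate tilts $\hat u$ above this golden-rule level, since impatience favors front-loaded output. So I would show that $\hat u_W$ lies weakly to the right of the maximizer of $y^\star(\cdot)$ on $[0,1]$. On that range, the quantity $y^\star(\hat u)-\bar S$ is decreasing in $\hat u$; this needs concavity of $y^\star$ or at least single-peakedness, which I would check from the quadratic coefficient $\bar S(1-\beta)-\gamma$ plus the interiority constraints. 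Then every $(\alpha,\beta)$ with $y^\star(\hat u_W)<\bar S$ also has $y^\star(\hat u_F)<\bar S$, giving weak inclusion via Lemma~\ref{lem:loss-region}. For strictness, I would take a point on the worker's break-even locus $B$ where both policies are interior. There the firm's $\hat u_F>\hat u_W$ gives strict loss, and by continuity an open neighborhood lies in the firm's loss set but not the worker's.

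\textbf{Main obstacle.} I expect the hardest step to be signing $\partial V_S/\partial\delta$ through the Riccati root, especially when $\beta<1$, where $a$ and $u_1$ change sign. If implicit differentiation becomes messy, I would fall back on a comparison argument. I would write $\delta V_S$ via the envelope (costate) equation and show that the costate ODE's unique bounded solution is decreasing in $\delta$. This uses the positivity of $1+(\beta-1)u$ and a monotone-comparison argument for that linear ODE.
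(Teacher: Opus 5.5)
Your Steps 1 and 2 are the paper's argument. Differentiating the coefficient-matching equations in $\delta$ is exactly what the paper does via the envelope identity $V+\delta V_\delta=V_{S\delta}\dot S$. Your fixed-point comparison of steady states is Step~1 of Lemma~\ref{lem:loss-region}, written in $S$ instead of $u$.

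The obstacle you anticipate does not arise. Your own Riccati form $(\delta+2\kappa)a=(\beta-1-2\kappa\bar S a)^2/(4\gamma)$ forces $a\ge 0$, with $a=0$ only at $\beta=1$; it is $u_1$, not $a$, that changes sign. Differentiating gives $a_\delta=-a/(\delta+2\hat\kappa)\le 0$ and $(\delta+\hat\kappa)\,b_\delta=-b+2\kappa\bar S(1-u_0)\,a_\delta$. Your costate integral gives $b=V_S(0)>0$ directly, because $1+(\beta-1)u_t\ge\min\{1,\beta\}>0$. Under stability, $u_0<1$ is equivalent to $\hat S>0$. Hence $b_\delta<0$, and $V_{S\delta}=2a_\delta S+b_\delta<0$ for every $\beta$. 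The pathwise-comparison fallback is therefore unnecessary. It would also be awkward for $\beta>1$, where higher usage raises the integrand $1+(\beta-1)u_t$.

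Step 3 is a genuinely different route, and as planned it has a hole. The paper writes $y^\star(\hat u)-\bar S=\hat u\,(c_1+c_2\hat u)$ with $c_2=(1-\beta)\bar S-\gamma$. When $c_2\le 0$, the loss set in $\hat u>0$ is an upper interval, so $\hat u_F>\hat u_W$ gives inclusion without locating any golden-rule point. When $c_2>0$, interiority forces $\alpha<\gamma+\bar S$, so every interior steady state is a loss.

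Your golden-rule argument buys more in the concave case: the firm's steady-state output is strictly below the worker's. But the tilt must be proved. The clean proof combines the steady-state costate condition $(\delta+\kappa)V'(\hat S)=1+(\beta-1)\hat u$ with the first-order condition $\partial_u p=\kappa\bar S V'$. This gives $dy^\star/du=-\frac{\delta}{\delta+\kappa}\bar S\,(1+(\beta-1)\hat u)<0$ at any interior steady state.

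More importantly, single-peakedness genuinely fails. Stability requires only $(1-\beta)\bar S<2\gamma(\delta+\kappa)/(\delta+2\kappa)$, and this bound exceeds $\gamma$. So $c_2>0$ occurs with stable interior policies, and there $y^\star$ is convex and not decreasing to the right of its maximizer. The same slope identity closes this case: a convex $y^\star$ with negative slope at $\hat u$ has $y^\star(\hat u)<y^\star(0)=\bar S$, so both policies produce loss. You still have to handle the case explicitly.

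Finally, your strictness witness on $B$ has the merit of keeping both policies interior. The paper's adoption-frontier witness does not, since it puts the worker at $\hat u_W=0$. However, you must show two things. First, $B$ meets the region where both policies are interior and $c_2<0$. Second, worker-gain points lie arbitrarily close by, since only those witness strictness.
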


\begin{proof}[Proof sketch]
From the first-order condition $u^*(S;\delta) = [\alpha+(\beta-1) S - \kappa\bar S V_S(S;\delta)]/(2\gamma)$, usage is decreasing in $V_S(S;\delta)$.\footnote{We show in the Appendix that the value $V$ rises with skill and that a more patient decision-maker places a higher marginal value on skill.} A higher $\delta$ reduces the weight on future skill in the Bellman equation, which lowers the marginal value of skill for all $S$. Lemma~\ref{lem:loss-region} in the Appendix completes the argument by comparing the two loss regions.
\end{proof}

Managers may be short-termist, meaning they care less about long-term expertise and so push for more AI use. This type of manager adopts AI for an immediate productivity boost, while discounting the deskilling that follows. Figure~\ref{fig:trap-welfare} shows an illustrative example. 

\begin{figure}[htbp]
  \centering
  \includegraphics[width=0.8\linewidth]{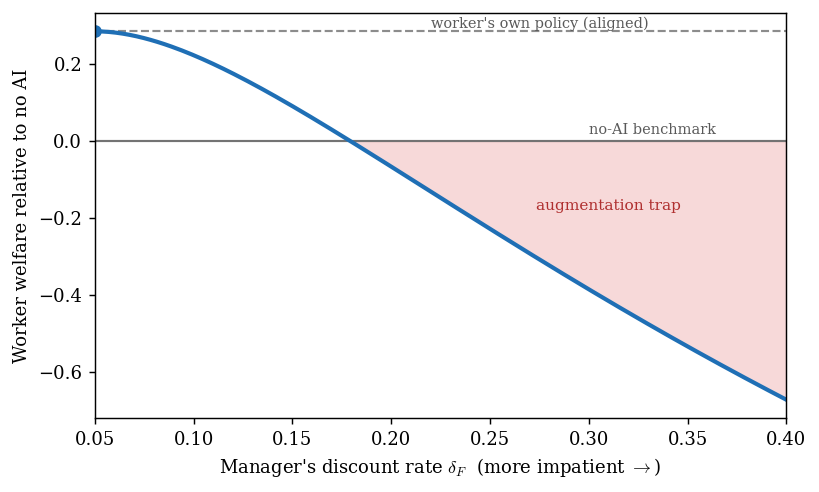}
  \caption{Misalignment and the augmentation trap. The figure plots worker lifetime welfare under the manager's policy, relative to no AI, as $\delta_F$ rises while $\delta_W=0.05$. The dashed line marks the aligned benchmark, $\delta_F=\delta_W$. Welfare declines with managerial impatience and crosses the no-AI benchmark at $\bar{\delta}_F\approx 0.18$; the shaded area is the trap region. The worked example uses $\delta_F=0.20$. Parameters: $\alpha=0.8$, $\beta=1.3$, $\gamma=1$, $\kappa=0.25$, $\bar S=1$.}
\label{fig:trap-welfare}
\end{figure}

\subsubsection{Worker Skill Externality}\label{sec:outside-options}

Beyond managerial short-termism, a second source of misalignment is that workers may value skill for reasons the firm's objective ignores: side projects, intellectual communities, independent understanding. These returns are an externality to the usage decision, because the decision-maker's objective does not account for them. We capture this with a term $\omega S_{it}$ ($\omega \ge 0$) added to the worker's payoff, so that per period the worker receives
\begin{equation}
y_{it}=(1-u_{it})S_{it}+\omega S_{it}+[\alpha+\beta S_{it}-\gamma u_{it}]u_{it}.
\end{equation}

\begin{proposition}[Worker skill externality reduces AI usage]
\label{prop:outside-options}
Fix $\alpha,\beta,\gamma,\kappa,\delta>0$, $\bar S>0$, and $\omega\ge 0$. The optimal usage policy with the worker skill externality is
\begin{equation}
u^*(\omega, S_{it})=u_0-u_{\omega}\,\omega+u_1 S_{it},
\end{equation}
where
\[
u_{\omega}=\frac{\kappa \bar S}{\gamma\delta+\sqrt{[\kappa((\beta-1)\bar S+2\gamma)+\gamma\delta]^2-\kappa^2(\beta-1)^2\bar S^2}}>0.
\]
At $\omega=0$ the worker's and firm's policies coincide. The steady-state skill level is
\begin{equation}
\hat S(\omega)=\frac{\bar S(1-u_0+u_{\omega}\omega)}{1+u_1\bar S},
\end{equation}
which is strictly increasing in $\omega$ when the interior steady state is stable ($1+u_1\bar S>0$), so that workers who value their skill more highly preserve more of it.
\end{proposition}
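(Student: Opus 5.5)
The approach is to repeat the guess-and-verify argument of Lemma~\ref{lem:quad-value} with the extra linear payoff term $\omega S$. With the externality, the Bellman equation becomes
\[
\delta V(S)=\max_{u}\Big\{(1+\omega-u)S+(\alpha+\beta S-\gamma u)u+V'(S)\,\kappa\big[\bar S(1-u)-S\big]\Big\}.
\]
The first-order condition is unchanged in form, $u^*=[\alpha+(\beta-1)S-\kappa\bar S V'(S)]/(2\gamma)$. I will posit $V(S)=aS^2+b(\omega)S+c(\omega)$ and substitute the policy back into the Bellman equation, then match coefficients on $S^2$, $S$, and $1$.

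Because $\omega S$ is linear in $S$, it does not enter the $S^2$ equation. That equation is a quadratic (Riccati) equation in $a$ alone, identical to the $\omega=0$ case. I take the same root as in Lemma~\ref{lem:quad-value}: the one giving a stable, economically relevant value function, i.e.\ the branch whose expression involves $\sqrt{D}$ from the footnote to Proposition~\ref{prop:u-decreasing}. It follows that $u_1=(\beta-1-2\kappa a\bar S)/(2\gamma)$ is independent of $\omega$.

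The $S$-coefficient equation is linear in $b$. Collecting terms, it should take the form
\[
b\,(\delta+\hat\kappa)=1+\omega+\text{(terms in }\alpha,a),
\]
where $\hat\kappa=\kappa(1+u_1\bar S)$. More precisely, I will derive $b\,[\delta+\kappa+\kappa^2\bar S^2 a/\gamma-\kappa\bar S(\beta-1)/(2\gamma)]=\dots+\omega$. Hence $b(\omega)=b(0)+\omega/(\delta+\hat\kappa)$ with $\partial b/\partial\omega>0$. The intercept of the policy is $(\alpha-\kappa\bar S b)/(2\gamma)$, so
\[
u_\omega=\frac{\kappa\bar S}{2\gamma(\delta+\hat\kappa)}.
\]
The main computational step is to show this equals the displayed formula. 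Using the explicit root for $a$, I expect $2\gamma(\delta+\hat\kappa)=\gamma\delta+\sqrt{D}$. To check this, substitute $\hat\kappa=\kappa+\kappa u_1\bar S$ and $u_1$ in terms of $a$, then use the Riccati relation to eliminate $a$. This is the step I expect to be the main obstacle, chiefly in getting the root selection and its sign conventions right. Positivity of $u_\omega$ follows from $D>0$ (or, more generally, from stability $\delta+\hat\kappa>0$, since the root chosen makes $\sqrt{D}\ge 0$).

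At $\omega=0$ the Bellman equation reduces exactly to the firm's, so the two policies coincide. For the steady state, I plug $u_0-u_\omega\omega$ in place of $u_0$ into formula \eqref{eq:Shat}, which was derived for an arbitrary linear policy. This gives
\[
\hat S(\omega)=\frac{\bar S(1-u_0+u_\omega\omega)}{1+u_1\bar S}.
\]
Its derivative with respect to $\omega$ is $\bar S u_\omega/(1+u_1\bar S)$, which is strictly positive when $1+u_1\bar S>0$ and $u_\omega>0$. This completes the plan, under the standing assumption that the policy stays interior.
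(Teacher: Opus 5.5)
Your proposal is correct and follows the paper's proof step for step. The $S^2$ equation, and hence $a$ and $u_1$, is unchanged because $\omega S$ is linear in $S$. The $S$-coefficient equation is linear in $b$ with $\omega$ entering additively, so $\partial b/\partial\omega=1/(\delta+\hat\kappa)$. This equals $2\gamma/(\gamma\delta+\sqrt{D})$ via $\delta+2\hat\kappa=\sqrt{D}/\gamma$, the identity used in the proof of Proposition~\ref{prop:overuse}. The steady-state claim then follows from \eqref{eq:Shat}.

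One slip to fix: in your more explicit bracket for the coefficient of $b$, the last two terms have the wrong signs. The correct coefficient is $\delta+\kappa+\kappa\bar S(\beta-1)/(2\gamma)-\kappa^2\bar S^2a/\gamma=\delta+\hat\kappa$, which is the form your identity $2\gamma(\delta+\hat\kappa)=\gamma\delta+\sqrt{D}$ requires.
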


When the worker and firm are aligned on discount rate $\delta$ but the worker has positive skill value $\omega$, usage is over-prescribed because the firm ignores this private return. Thus, a worker with a large enough $\omega$ is worse off than with no AI (Proposition~\ref{prop:trap-conditions}).

We measure welfare as the present discounted value of the worker's per-period payoff, using the worker's own discount rate $\delta_W$. When the worker values skill beyond its contribution to output, that payoff includes the private return $\omega S$. In the parametric model,
\[
V_W \;=\; \int_0^\infty e^{-\delta_W t} \bigl[\, p(S(t),\, u(t)) \;+\; \omega\, S(t) \,\bigr]\, dt.
\]

\begin{definition}[Augmentation trap]
\label{def:trap}
Let $V_W^{\text{no-AI}}$ denote the worker's lifetime welfare absent AI, and let $V_W(u^*)$ denote the worker's lifetime welfare under the equilibrium usage policy $u^*$. A deployment is in the \emph{augmentation trap} if
\[
V_W(u^*) \;<\; V_W^{\text{no-AI}}.
\]
\end{definition}

\begin{proposition}[Conditions for the augmentation trap]
\label{prop:trap-conditions}

\begin{enumerate}
\leavevmode
\item When the decision-maker and the worker are aligned ($\delta_F = \delta_W$, $\omega = 0$), $V_W(u^*_W) \ge V_W^{\text{no-AI}}$. Mechanically, a worker choosing their own usage does not fall into the trap.

\item Under discount-rate divergence ($\delta_F > \delta_W$), the decision-maker's policy weakly raises usage relative to the worker's preferred policy, and strictly so at skill levels where both policies are interior. Suppose steady-state output remains below the no-AI level in the myopic limit $\delta_F \to \infty$. Then there is a threshold $\bar\delta_W > 0$ such that for every $\delta_W < \bar\delta_W$, $V_W(u^*_F) < V_W^{\text{no-AI}}$ for all sufficiently large $\delta_F$. Both thresholds depend jointly on all model parameters.
\item Under the worker skill externality ($\omega > 0$), the firm's policy ignores the private return to skill, and the trap appears for sufficiently large $\omega$.
\end{enumerate}
\end{proposition}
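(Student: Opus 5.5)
For part 1 we use a revealed-preference argument. When $\delta_F=\delta_W$ and $\omega=0$, the decision-maker's objective \eqref{eq:value-fn} is exactly the worker's welfare $V_W$. The no-AI path $u_t\equiv 0$ is a feasible policy, and from $S_0=\bar S$ it delivers $\bar S/\delta_W=V_W^{\text{no-AI}}$. The optimal policy $u^*_W$ maximizes this objective over all feasible paths, so $V_W(u^*_W)\ge V_W^{\text{no-AI}}$. This uses no functional-form structure, only feasibility of $u\equiv 0$.

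For part 2 we first cite Proposition~\ref{prop:overuse}. It says the firm's usage weakly exceeds the worker's, and strictly so where both policies are interior. At corners the comparison is weak because the argmax is monotone in $V_S$ and $V_S$ is decreasing in $\delta$ (Appendix footnote).

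For the trap claim, decompose the worker's welfare under the firm's policy as
\[
V_W(u^*_F)-\frac{\bar S}{\delta_W}=\int_0^\infty e^{-\delta_W t}\bigl[p(S_t,u_{F,t})-\bar S\bigr]\,dt .
\]
In the limit $\delta_F\to\infty$ the term $V_S(\cdot;\delta_F)\to 0$. Hence $u^*_F$ converges to the myopic policy $\tilde u(S)=\mathrm{proj}_{[0,1]}[(\alpha+(\beta-1)S)/(2\gamma)]$. Under this policy skill converges exponentially at rate at least $\kappa$ to a steady state $\tilde S$, with output $\tilde y<\bar S$ by hypothesis. By continuity in $\delta_F$ of the linear policy coefficients (Lemma~\ref{lem:quad-value}, with the corner cases handled by projection), there is a finite $\bar\delta$ such that for all $\delta_F\ge\bar\delta$ the steady-state output gap is at most $-\eta<0$ for some fixed $\eta$. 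The flow gap $p-\bar S$ is bounded above by a constant $M$, since $u,S$ are bounded. It also lies below $-\eta/2$ after a time $T$ that can be chosen uniformly in $\delta_F\ge\bar\delta$, because the convergence rate $\hat\kappa$ is bounded below. Then
\[
V_W(u^*_F)-\frac{\bar S}{\delta_W}\le \frac{M(1-e^{-\delta_W T})}{\delta_W}-\frac{\eta e^{-\delta_W T}}{2\delta_W}.
\]
Multiplying by $\delta_W$ gives a bound that tends to $-\eta/2<0$ as $\delta_W\to 0$. Choose $\bar\delta_W$ so that this bound is negative for all $\delta_W<\bar\delta_W$. Both thresholds then depend on all parameters through $\eta$, $M$ and $T$.

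Part 3 follows the same pattern with $\delta_F=\delta_W=\delta$. The firm's policy is independent of $\omega$, so its path $(S^F_t,u^F_t)$ is fixed. The worker's welfare gap is then
\[
\Delta(\omega)=\int e^{-\delta t}\bigl[p(S^F_t,u^F_t)-\bar S\bigr]dt+\omega\int e^{-\delta t}\bigl(S^F_t-\bar S\bigr)dt .
\]
The first integral is nonnegative but finite. The second is strictly negative whenever adoption occurs ($u^F>0$ on a set of positive measure), because skill then drops strictly below $\bar S$ by \eqref{eq:skill-dynamics-simplified}. $\Delta$ is therefore affine in $\omega$ with negative slope, and it is negative for $\omega$ above an explicit threshold. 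We must exclude the case where the firm does not adopt, since the trap then cannot occur, and state the result for adopting deployments.

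The main obstacle is the uniformity in part 2. We need the limit of $u^*_F$ as $\delta_F\to\infty$ to hold along the whole transition path from $\bar S$, including corner regions where Lemma~\ref{lem:quad-value} does not apply directly. The first approach to try is to bound $\kappa\bar S V_S(\cdot;\delta_F)\le \kappa\bar S\cdot C/(\delta_F+\kappa)$. This bound comes from the sensitivity of discounted output to initial skill under any fixed policy, since skill differences decay at rate $\kappa$. It gives uniform convergence of $u^*_F$ to $\tilde u$ without solving for the quadratic coefficients.
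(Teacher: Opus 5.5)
Your proposal is correct. Parts 1 and 3 are the paper's arguments. Part 1 is revealed preference against the feasible policy $u\equiv 0$. Part 3 writes $W_\omega=W_0+\omega\int_0^\infty e^{-\delta_W t}S_t\,dt$, with $W_0\ge\bar S/\delta_W$ by part 1 and a strictly negative slope whenever the firm adopts. The only difference is that the paper evaluates that slope explicitly along the exponential interior path, while your sign argument needs no such structure.

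Part 2 shares the paper's idea: pass to the myopic limit and let the permanent deficit, of order $1/\delta_W$, swamp a bounded transient. The execution differs.
\begin{itemize}
\item The paper uses the linear policy of Lemma~\ref{lem:quad-value} to write welfare in closed form as $p(\hat S_F,\hat u_F)/\delta_W+A/(\delta_W+\hat\kappa_F)+B/(\delta_W+2\hat\kappa_F)$. It reads $a,b\to 0$ off the closed-form coefficients to get the limit welfare $W^{\text{myo}}$, picks $\bar\delta_W$ so that $W^{\text{myo}}<\bar S/\delta_W$, and then appeals to continuity in $\delta_F$. Its large-$\delta_F$ threshold may therefore depend on $\delta_W$, and the closed form is valid only if the whole path from $\bar S$ stays interior.
\item Your time-split estimate gives an explicit $\bar\delta_W=T^{-1}\ln\bigl(1+\eta/(2M)\bigr)$. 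It also gives a single $\delta_F$-threshold that works for every $\delta_W<\bar\delta_W$ at once, because the firm's path does not depend on $\delta_W$. That is a stronger quantifier order than the statement requires.
\item Your envelope bound $|V_S|\le\max\{1,|\beta|\}/(\delta_F+\kappa)$ delivers the myopic limit even when the path passes through a corner. It holds because skill gaps decay at rate $\kappa$ under any policy and $\partial p/\partial S=1+(\beta-1)u$.
\end{itemize}
In return, the paper's route displays the transient explicitly in terms of $\hat\kappa_F$.

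One slip: under the myopic rule, skill converges at rate at least $\kappa$ only when $\beta\ge 1$. For $\beta<1$ the interior rate is $\kappa\bigl(1+(\beta-1)\bar S/(2\gamma)\bigr)<\kappa$. Your estimate actually needs only a uniform positive lower bound on $\hat\kappa_F$ for large $\delta_F$, which is what you invoke a sentence later. That bound holds when the myopic steady state is stable, which is the same assumption the paper needs through $\hat\kappa_F>0$.
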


A worker that sets their own usage chooses their optimal policy, so even if using AI is always welfare-decreasing they can just choose not to use it. The trap requires misalignment between the usage decision-maker and the bearer of skill costs. In our expository example it appears when the worker's career is long enough that the permanent loss outweighs the temporary gain, or when the firm ignores what the worker's skill is worth outside the job. The Appendix gives sufficient conditions for parts (2) and (3).

The steady-state loss region in Figure~\ref{fig:nomathheatmap} identifies where the trap can occur. In the limit $\delta_W \to 0$, the worker weights long-run productivity infinitely more than the transition path, so their welfare ranking reduces to the steady-state comparison. In this case, Region~II depicts an augmentation trap. For $\delta_W > 0$, the trap region is a subset of Region~II, because the transition surplus offsets the steady-state deficit.

\subsection{Skill Stratification}\label{sec:stratification}

The preceding analysis focuses on interior policies, where \(0<u^*(S)<1\) and workers converge to a single steady state. A different dynamic arises when AI is sufficiently skill-substituting that the interior usage rule reaches the natural bounds \(u=0\) and \(u=1\). When \(\beta<1\), the interior policy slopes downward in skill: lower-skill workers use AI more heavily, so \(u_1<0\). If this slope is steep enough, the interior rule assigns low-skill workers to full automation and high-skill workers to no automation.

At the two boundaries, the skill dynamics are simple. Under full automation, \(u=1\), practice is fully displaced and skill converges to zero. Under no automation, \(u=0\), the worker practices unaided and skill converges to \(\bar S\). Stratification occurs because the threshold separating these regions is unstable, so workers below it move toward full automation and skill loss, while workers above it move toward no automation and skill recovery.

The interior rule reaches both usage boundaries when
\[
(1-\beta+2\kappa a\bar S)\bar S > 2\gamma,
\]
where \(a\) is the quadratic value-function coefficient from Lemma~\ref{lem:quad-value}. When this condition holds and
\[
\alpha \in \left(2\gamma+\kappa b\bar S,\; (1-\beta)\bar S+\kappa b\bar S+2\kappa a\bar S^2\right),
\]
the feasible usage rule contains both a full-automation region and a no-automation region.

\begin{proposition}[Permanent skill stratification]\label{prop:stratification}
Suppose \(\beta<1\) and the interior usage rule \(\tilde u(S)=u_0+u_1S\) satisfies \(u_1<0\), \(\tilde u(0)>1\), and \(\tilde u(\bar S)<0\). Under the feasible usage rule that assigns \(u=1\) when \(\tilde u(S)\geq 1\), assigns \(u=0\) when \(\tilde u(S)\leq 0\), and follows the interior rule otherwise, there exists an unstable threshold \(S_{\mathrm{eq}}\in(0,\bar S)\). Workers with initial skill \(S_0<S_{\mathrm{eq}}\) converge to \(\hat S=0\), while workers with \(S_0>S_{\mathrm{eq}}\) converge to \(\hat S=\bar S\).
\end{proposition}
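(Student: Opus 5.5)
The argument is a one-dimensional phase-line analysis of the closed-loop skill dynamics $\dot S = \kappa\big(\bar S(1-u(S)) - S\big)$ under the feasible (clipped) rule $u(S)=\min\{1,\max\{0,\tilde u(S)\}\}$. The first step is to locate the two switching points. Because $\tilde u$ is affine with $u_1<0$, $\tilde u(0)>1$ and $\tilde u(\bar S)<0$, the intermediate value theorem gives unique points $0<S_1<S_0^\dagger<\bar S$ with $\tilde u(S_1)=1$ and $\tilde u(S_0^\dagger)=0$, namely $S_1=(1-u_0)/u_1$ and $S_0^\dagger=-u_0/u_1$. The feasible rule therefore has three pieces: $u=1$ on $[0,S_1]$, $u=\tilde u(S)$ on $(S_1,S_0^\dagger)$, and $u=0$ on $[S_0^\dagger,\bar S]$. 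The map $S\mapsto u(S)$ is continuous and piecewise affine, so the vector field $f(S):=\kappa(\bar S(1-u(S))-S)$ is Lipschitz. Solutions are therefore unique, and the interval $[0,\bar S]$ is forward invariant, since $f(0)=\kappa\bar S(1-u(0))\ge 0$ and $f(\bar S)=-\kappa\bar S u(\bar S)\le 0$.

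The second step is to find the rest points and sign $f$ on each piece.
\begin{itemize}
\item On $[0,S_1]$, $f(S)=-\kappa S$. This is negative for $S>0$ and vanishes at $S=0$, so $0$ is an equilibrium.
\item On $[S_0^\dagger,\bar S]$, $f(S)=\kappa(\bar S-S)$. This is positive for $S<\bar S$ and vanishes at $S=\bar S$, so $\bar S$ is an equilibrium.
\item On the middle piece, $f(S)=\kappa\big(\bar S(1-u_0)-(1+u_1\bar S)S\big)$ is affine, with $f(S_1)=-\kappa S_1<0$ and $f(S_0^\dagger)=\kappa(\bar S-S_0^\dagger)>0$.
\end{itemize}
The intermediate value theorem then gives a zero $S_{\mathrm{eq}}\in(S_1,S_0^\dagger)\subset(0,\bar S)$. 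Since $f$ goes from negative to positive, its slope $-\kappa(1+u_1\bar S)$ must be positive, so $1+u_1\bar S<0$. The zero is therefore unique, $S_{\mathrm{eq}}=\bar S(1-u_0)/(1+u_1\bar S)$, which matches formula \eqref{eq:Shat}. This slope condition is exactly the failure of the stability condition $1+u_1\bar S>0$ used in Proposition~\ref{prop:overuse}, and it is why $S_{\mathrm{eq}}$ is unstable.

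The third step is convergence. Collecting the signs, $f<0$ on $(0,S_{\mathrm{eq}})$ and $f>0$ on $(S_{\mathrm{eq}},\bar S)$. By uniqueness, a trajectory cannot cross a rest point. A trajectory starting at $S_0<S_{\mathrm{eq}}$ is therefore strictly decreasing and bounded below by $0$, so it converges to a rest point of $f$ in $[0,S_0)$, and the only such point is $0$. Symmetrically, a trajectory starting at $S_0>S_{\mathrm{eq}}$ increases to $\bar S$. For an explicit version, once a low trajectory enters $[0,S_1]$ it satisfies $S_t=S_{t_1}e^{-\kappa(t-t_1)}$. Once a high trajectory enters $[S_0^\dagger,\bar S]$ it satisfies $\bar S-S_t=(\bar S-S_{t_1})e^{-\kappa(t-t_1)}$. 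In the middle band, the distance $|S-S_{\mathrm{eq}}|$ grows exponentially at rate $-\kappa(1+u_1\bar S)>0$, so trajectories leave the band in finite time.

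The main obstacle is mild and comes from the clipping. The feasible rule is not the optimizer of the original Bellman problem: the coefficients $a$ and $b$ in $u_0$ and $u_1$ come from the interior quadratic solution of Lemma~\ref{lem:quad-value}. The proposition is stated for the clipped rule taken as given, so I would not try to prove its optimality. I would only check that the kinks at $S_1$ and $S_0^\dagger$ do not break well-posedness. Continuity of the clipped rule makes $f$ Lipschitz, which secures existence, uniqueness and the no-crossing argument. This is the point to state carefully, since the unstable-threshold conclusion depends on the sign change at $S_{\mathrm{eq}}$ being the only one in $(0,\bar S)$.
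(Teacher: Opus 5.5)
Your proposal is correct and follows essentially the same phase-line argument as the paper: the same switching points $(1-u_0)/u_1$ and $-u_0/u_1$, the same three-piece sign analysis of $\dot S$, and the same unstable interior crossing $S_{\mathrm{eq}}=\bar S(1-u_0)/(1+u_1\bar S)$. The differences are minor. You obtain $1+u_1\bar S<0$ from the endpoint signs $f(S_1)<0<f(S_0^\dagger)$, whereas the paper derives it from $|u_1|\bar S>u_0>1$, and you spell out the Lipschitz well-posedness and monotone-convergence steps that the paper leaves implicit.
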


The expression for \(S_{\mathrm{eq}}\) is the same as the interior steady state \(\hat S\) in \eqref{eq:Shat}. The difference is stability. In the interior stable case, \(1+u_1\bar S>0\), and nearby trajectories converge to \(\hat S\). In the stratification case, \(1+u_1\bar S<0\), and the same crossing repels nearby trajectories.

A lower \(\kappa\) means that skill is slower to rebuild once practice is displaced. This makes heavy offloading more attractive for an impatient decision-maker and raises the policy intercept \(u_0\), making the full-automation condition easier to satisfy. The mechanism is therefore most relevant in settings where expertise accumulates slowly. Figure~\ref{fig:k-curve} illustrates such a case with \(\kappa=0.1\), so the time scale for skill recovery is long.

For example, consider a software team that deploys the same coding assistant across junior and senior engineers. If the effective workflow is skill-substituting, so that lower-skill engineers receive larger marginal gains from offloading, the policy induces heavier reliance among junior engineers and lighter reliance among senior engineers. Under the stratification conditions, junior engineers move toward full automation and stop accumulating the underlying coding skill, while senior engineers continue practicing and converge toward their skill potential. The result is a widening skill gap within the team.

\begin{figure}[htbp] \centering \includegraphics[width=0.8\linewidth]{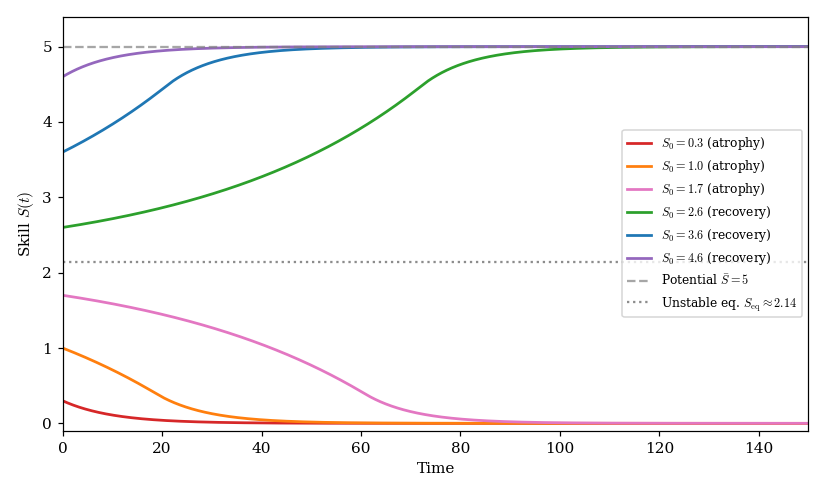} \caption{\textbf{Skill stratification.}
The downward-sloping usage rule creates an unstable threshold. Workers below the threshold rely heavily on AI and deskill; workers above it limit use of AI and converge to $\bar S$. Parameters: $\beta=0.71$, $\alpha=3.85$, $\bar S=5$, $\kappa=0.1$, $\gamma=1$, $\delta=0.1$.} \label{fig:k-curve} \end{figure}

\subsection{Misalignment and Skill Divergence}

\subsubsection{Skill Divergence}\label{sec:misalign-divergence}

When the stratification condition holds, a small difference in the decision-maker's objective can move a worker from the high-skill steady state to the low-skill one. The threshold $S_{\mathrm{eq}}$ depends on the policy intercept $u_0$, which is increasing in the discount rate $\delta$ and decreasing in the skill externality $\omega$. A manager with a slightly higher discount rate raises $u_0$, which shifts $S_{\mathrm{eq}}$ upward, potentially moving a worker from above the threshold to below it. Under the worker's own policy, this worker would have converged to $\bar S$. Under the firm's policy, they converge to zero.

\begin{proposition}[Misalignment and skill divergence]\label{prop:basin-switch}
Suppose $\beta<1$, that $(1-\beta + 2\kappa a\bar S)\bar S > 2\gamma$, and that the unconstrained linear policy satisfies the conditions of Proposition~\ref{prop:stratification} ($\tilde u(0)>1$ and $\tilde u(\bar S)<0$), so that the feasible policy produces the population split. The threshold $S_{\mathrm{eq}}$ is strictly increasing in $\delta$ and strictly decreasing in $\omega$. Consequently:
\begin{enumerate}\setlength{\itemsep}{2pt}
\item For any worker with initial skill $S_0 \in (S_{\mathrm{eq}}(\delta_W),\; S_{\mathrm{eq}}(\delta_F))$, the worker's own policy produces convergence to $\bar S$, while the firm's policy produces convergence to $0$.
\item For any worker with initial skill $S_0 \in (S_{\mathrm{eq}}(\omega),\; S_{\mathrm{eq}}(\omega=0))$, the worker's own policy (which accounts for $\omega$) produces convergence to $\bar S$, while the firm's policy (which ignores $\omega$) produces convergence to $0$.
\end{enumerate}
\end{proposition}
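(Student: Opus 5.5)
The plan has three steps. First I get an explicit formula for the threshold. Then I sign its derivatives in $\delta$ and $\omega$. The two numbered conclusions then follow directly from Proposition~\ref{prop:stratification}.

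\textbf{Step 1: formula for the threshold.} The threshold $S_{\mathrm{eq}}$ is the crossing of the interior rule with the skill dynamics, $\bar S(1-\tilde u(S))=S$. Hence
\[
S_{\mathrm{eq}}=\frac{\bar S(1-u_0)}{1+u_1\bar S},
\]
as noted after Proposition~\ref{prop:stratification}. In the stratification case $1+u_1\bar S<0$. Since $S_{\mathrm{eq}}\in(0,\bar S)$, the numerator $1-u_0$ is also negative, consistent with $\tilde u(0)=u_0>1$.

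With the externality, Proposition~\ref{prop:outside-options} replaces $u_0$ by $u_0-u_\omega\omega$ and leaves $u_1$ unchanged. This is because $\omega S$ is linear in $S$ and only shifts $b$, not $a$. So
\[
S_{\mathrm{eq}}(\omega)=\frac{\bar S(1-u_0+u_\omega\omega)}{1+u_1\bar S}.
\]
Differentiating gives $\partial S_{\mathrm{eq}}/\partial\omega=\bar S u_\omega/(1+u_1\bar S)<0$, because $u_\omega>0$ and the denominator is negative. This gives the $\omega$ monotonicity immediately. Notice the sign is the reverse of the stable case in Proposition~\ref{prop:outside-options}.

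\textbf{Step 2: monotonicity in $\delta$ (the main obstacle).} Both $u_0$ and $u_1$ depend on $\delta$ through $a$ and $b$, which are the Riccati-type solutions from Lemma~\ref{lem:quad-value}. I would first derive closed forms.

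Matching the $S^2$ terms of the Bellman equation gives a quadratic in $a$. Its root is the one with $\sqrt D$, the same $D$ as in the footnote to Proposition~\ref{prop:u-decreasing}. This yields
\[
u_1=\frac{\beta-1-2\kappa a\bar S}{2\gamma}=\frac{-(\gamma\delta+\kappa(\beta-1)\bar S+2\kappa\gamma)+\sqrt D}{\dots}
\]
(up to a form to be verified). Matching the linear terms gives $b$ in terms of $a$, and hence $u_0=(\alpha-\kappa b\bar S)/(2\gamma)$.

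The text already asserts that $u_0$ is increasing in $\delta$, since $V_S$ falls with impatience. I would verify this by implicit differentiation of the $b$-equation. The harder part is the joint effect, because $u_1$ also moves with $\delta$. It is cleanest to write
\[
S_{\mathrm{eq}}=\frac{\bar S}{1+u_1\bar S}-\frac{\bar S u_0}{1+u_1\bar S}
\]
and instead characterize $S_{\mathrm{eq}}$ intrinsically. At $S_{\mathrm{eq}}$ the stationary skill $S$ satisfies $\tilde u(S)=1-S/\bar S$. By the FOC this means
\[
\alpha+(\beta-1)S-\kappa\bar S\,V_S(S;\delta)=2\gamma(1-S/\bar S).
\]
Define $G(S,\delta)$ as the left side minus the right side. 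Then $G_\delta=-\kappa\bar S\,\partial_\delta V_S>0$, using the appendix fact that more patience raises $V_S$ pointwise. Meanwhile $G_S=2\gamma u_1+2\gamma/\bar S=2\gamma(1+u_1\bar S)/\bar S<0$ in the stratification regime. By the implicit function theorem, $dS_{\mathrm{eq}}/d\delta=-G_\delta/G_S>0$.

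The same argument reproduces the $\omega$ result with $G_\omega=-\kappa\bar S\,\partial_\omega V_S<0$. I expect the delicate point to be justifying $\partial_\delta V_S<0$ for the unconstrained quadratic $V$ (the coefficients $a,b$), as opposed to the true constrained value function. I would handle this by differentiating the closed forms of $a$ and $b$ in $\delta$ on the parameter set where $D>0$.

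\textbf{Step 3: the two conclusions.} Take $S_0\in(S_{\mathrm{eq}}(\delta_W),S_{\mathrm{eq}}(\delta_F))$, which is nonempty since $\delta_F>\delta_W$. Under the worker's rule, $S_0$ lies above that rule's threshold, so by Proposition~\ref{prop:stratification} the worker converges to $\bar S$. Under the firm's rule, $S_0$ lies below that rule's threshold, so the worker converges to $0$.

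This requires that the hypotheses of Proposition~\ref{prop:stratification}, namely $\tilde u(0)>1$ and $\tilde u(\bar S)<0$, hold at both discount rates. I take this as part of the standing assumption. Part 2 is identical, with $\omega$ in place of $\delta$.
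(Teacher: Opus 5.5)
Your $\omega$ computation and Step 3 match the paper. Your implicit-function setup in Step 2 is also the paper's reduction: since $G_S=2\gamma(1+u_1\bar S)/\bar S<0$, the sign of $dS_{\mathrm{eq}}/d\delta$ is the sign of $-V_{S\delta}$ at $S_{\mathrm{eq}}$.

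\textbf{The gap is in signing $V_{S\delta}$.} The ``appendix fact'' you invoke, from the proof of Proposition~\ref{prop:overuse}, is established only in the stable case. There $b_\delta=(2a_\delta\hat\kappa\hat S-b)/(\delta+\hat\kappa)$ is negative because $a_\delta<0$ and $\hat\kappa>0$. Under stratification $\hat\kappa=\kappa(1+u_1\bar S)<0$. The term $2a_\delta\hat\kappa S_{\mathrm{eq}}=2\gamma a|\hat\kappa|S_{\mathrm{eq}}/\sqrt D$ is then positive, and it dominates near the $D=0$ boundary.

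For example, take $\beta=0.7001$, $\alpha=3.85$, $\bar S=5$, $\kappa=0.1$, $\gamma=1$, $\delta=0.1$. All your hypotheses hold: $D>0$, $u_0\approx 1.19$, $u_0+u_1\bar S\approx -0.28$, and $S_{\mathrm{eq}}\approx 2.0$. Yet $b_\delta\approx 134>0$ and $a_\delta\approx -52.8$, so $V_{S\delta}(S)=2a_\delta S+b_\delta>0$ for $S$ below about $1.26$. So ``more patience raises $V_S$ pointwise'' is false in this regime. Your fallback of differentiating the closed forms of $a,b$ therefore cannot prove the claim as you state it. At best it could sign $V_{S\delta}$ at the single point $S_{\mathrm{eq}}$, which you never isolate. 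Your stated worry, unconstrained versus constrained value function, is not the real obstruction; the sign of $\hat\kappa$ is.

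\textbf{The missing idea is to evaluate only at the rest point.} The envelope identity $V+\delta V_\delta=V_{S\delta}\,\dot S$ holds for the quadratic regardless of the sign of $\hat\kappa$. Differentiate it in $S$ and use $\dot S(S_{\mathrm{eq}})=0$ and $\partial\dot S/\partial S=-\hat\kappa$. This gives $V_{S\delta}(S_{\mathrm{eq}})=-V'(S_{\mathrm{eq}})/(\delta+\hat\kappa)$.

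Since $\delta+2\hat\kappa=\sqrt D/\gamma>0$ and $\hat\kappa<0$, you get $\delta+\hat\kappa>0$. Together with $V'(S_{\mathrm{eq}})=2aS_{\mathrm{eq}}+b>0$ (using $a>0$, $b>0$, $S_{\mathrm{eq}}>0$), this yields $V_{S\delta}(S_{\mathrm{eq}})<0$. Hence $G_\delta>0$ at $S_{\mathrm{eq}}$ and $dS_{\mathrm{eq}}/d\delta>0$. This is exactly how the paper closes the $\delta$ step, and it is the only sign information your implicit-function argument actually needs.
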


Under the positive feedback, a small difference in objectives produces a binary difference in outcomes. A manager with a shorter planning horizon can push a worker from skill accumulation into deskilling. As in Proposition~\ref{prop:stratification}, this result describes what happens when the optimal interior policy reaches the usage bounds $u=0$ and $u=1$. 

\FloatBarrier
\subsection{Transition Dynamics}

Where the interior policy is stable, skill under optimal AI usage converges to $\hat S$ at effective rate $\hat\kappa$, while without AI it would converge to $\bar S$ at rate $\kappa$.

Figure~\ref{fig:ct-trajectories} illustrates productivity, skill, and usage dynamics for representative points in Regions I--V. In Region~II, output jumps at adoption, exceeds the no-AI baseline for a time, and then falls below the pre-adoption level as skill erodes. Adoption remains privately rational because the transition surplus exceeds the long-run shortfall in present-value terms.

\begin{figure}[t!]
\centering
\includegraphics[width=0.65\linewidth]{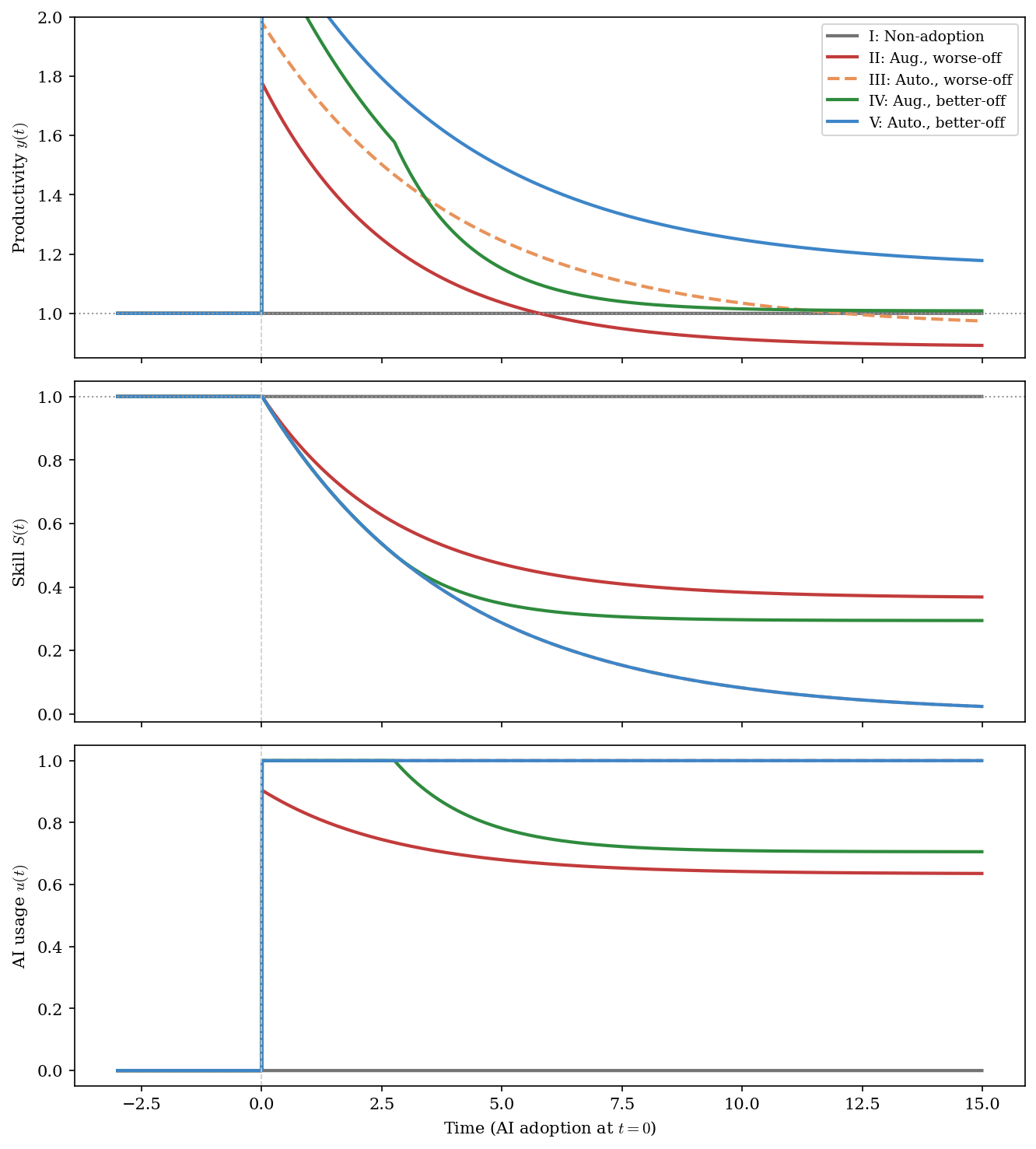}
\caption{\textbf{Transition dynamics across regions.}
The panels plot output, skill, and offloading after adoption for representative parameter points from Regions I--V in Figure~\ref{fig:nomathheatmap}. Region II illustrates augmentation with steady-state loss as skill erodes. Region III and Region V share the same skill and offloading paths under full automation. Parameters: $\gamma=0.10$, $\kappa=0.25$, $\delta=0.1$, $\bar S=1$.}
\label{fig:ct-trajectories}
\end{figure}
\FloatBarrier

Because the optimal policy is linear in skill, regime switches can occur as skill evolves. A worker may delay adoption until skill reaches a threshold, or move from full automation back to interior augmentation as skill decays. The switching thresholds and time-to-entry expressions are derived in the Appendix.

If a worker stops relying on AI and returns to doing the work themselves, skill recovers toward $\bar S$ at rate $\kappa$. 

\FloatBarrier

\section{Discussion}\label{sec:Discussion}

As more experiments document skill atrophy, it will be important to understand when it occurs and how its consequences should be managed. Organizations need measures that capture both the cost of skill loss and the importance of the affected skills. When the skill-productivity trade-off is present, misaligned incentives can also lead workers into an augmentation trap. This section translates the model into organizational implications: measuring the relevant skills and their complementarity with AI, redesigning workflows to preserve practice, using training to rebuild capability, aligning incentives, and distinguishing performance extraction from skill preservation. 

\subsection{Organizational Implications}\label{sec:diagnosis and levers}

Not all skill atrophy is harmful. When a technology makes a skill obsolete, preserving that skill may have little economic value: compilers reduced the need for many programmers to write assembly, and spreadsheets reduced the need for many forms of manual calculation. The risk we study arises when AI offloads practice in skills that remain valuable for AI-assisted production, such as problem formulation, error detection, contextual judgment, and knowing when not to trust the system. Because artificial intelligence tools can either automate obsolete work or erode complementary expertise, organizations need to identify which skills remain productive with AI and whether the workflow preserves the practice through which those skills are maintained.

\textbf{Measurement.} To understand the productivity effects of a deployment, firms need to know how much output depends on worker skill. This also requires measuring the quality and relevance of workers' skills over time. By relevance, we refer to the skills workers need to supervise AI well, since these will remain valuable as AI becomes pervasive. Section~\ref{sec:design} examines this issue in more detail. Measurement is complicated by the fact that output in the presence of AI is no longer a reliable proxy for skill. Strong performance may reflect the worker, the tool, or their interaction. A firm that cares about long-run capability therefore has to measure skill more directly, perhaps through periodic unaided assessments or evaluations of workers' reasoning. Without these measurements, a firm cannot tell whether skill loss is pushing a deployment toward the augmentation trap. Estimating $\beta$ is also important because it indicates the degree of complementarity or substitution between AI and expertise, which in turn shapes whether skill converges or diverges across the workforce.

\textbf{Workflow design.} Once these measures are established, if they indicate that a deployment is eroding relevant skill or moving toward the augmentation trap, the most direct response is usually to redesign the workflow. If AI is introduced so that the worker drops out of judgment and decision-making, short-run output may rise while skill deteriorates through disuse. If instead it is integrated so that the worker still has to evaluate, guide, and correct its output, the continued exercise of human judgment can help maintain expertise. Workflow redesign may therefore be one mechanism through which organizations move deployments from Region II toward Region IV as they discover the complements needed for durable AI productivity.

Firms can also shape the effective productivity parameters $\alpha$ and $\beta$ so that workers are encouraged to use AI in a skill-preserving way. This can be as simple as providing guidelines for AI use and encouraging workers to maintain cognitive vigilance. If AI productivity gains require cognitive offloading, then training and aligned incentives may be important. 

\textbf{Training.} Training can limit the impact of skill atrophy through two channels. The first increases the learning rate $\kappa$. Increasing $\kappa$ shrinks the steady-state loss region (Figure~\ref{fig:kappa-delta}). The second counteracts skill loss by rebuilding eroded skill through periods of practice in which the worker works without AI. This raises the worker's skill level and corresponds in the model to intervals of $u=0$ that move skill back toward maximum potential $\bar S$.  Initial skill matters too if skill stratification is a concern. As Section~\ref{sec:stratification} shows, starting conditions can determine whether a worker remains on a skill-building path or drifts into deskilling. In settings where skill divergence can occur, Proposition~\ref{prop:basin-switch} suggests that initial skill can define worker trajectories. This implies an initial period of skill development before intensive AI use may help workers enter the trajectory associated with continued skill development.

\begin{figure}[htbp]
  \centering
  \includegraphics[width=\linewidth]{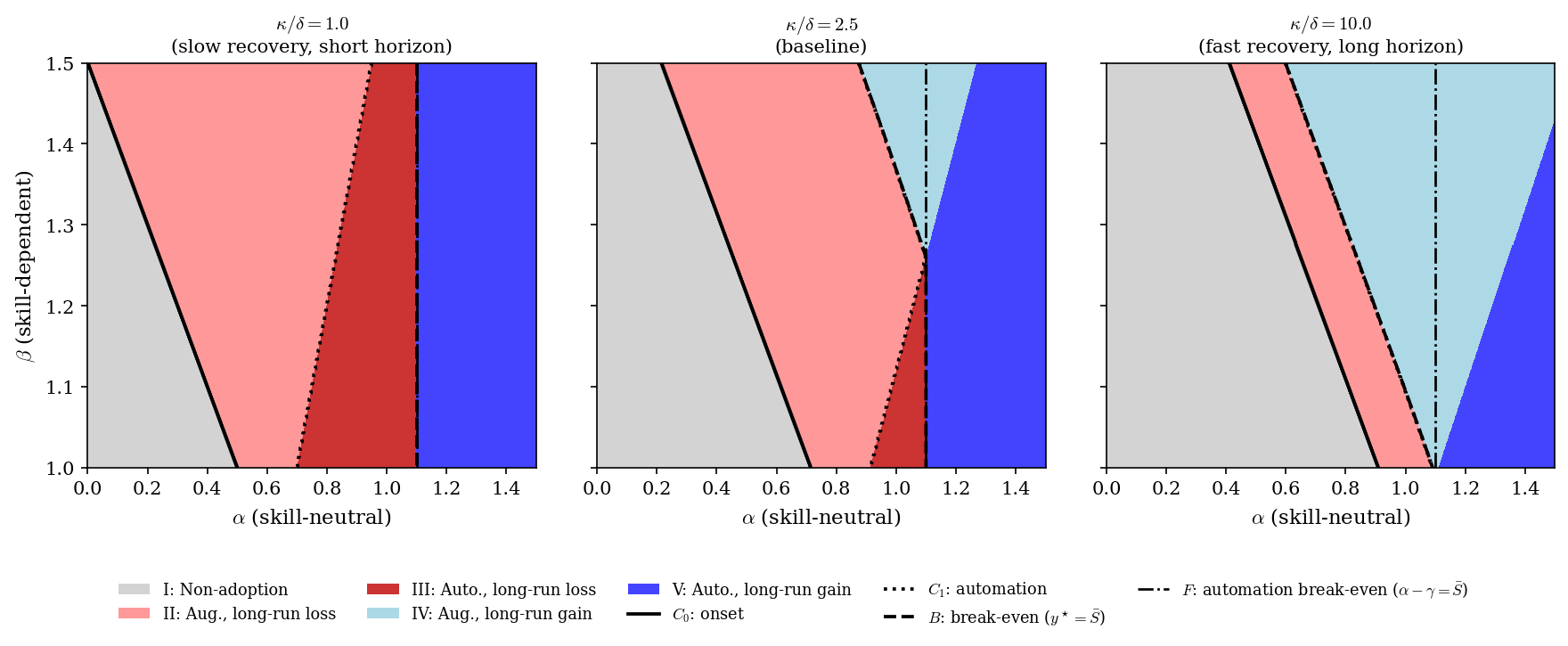}
  \caption{\textbf{Learning speed and the loss region.}
The ratio $\kappa/\delta$ captures how quickly skill recovers relative to the decision-maker's horizon. As $\kappa/\delta$ rises, the steady-state loss region contracts because the return to preserving skill is realized sooner, restraining offloading. Panels show $\kappa/\delta=1.0$, $2.5$, and $10.0$. All panels use $\gamma=0.10$ and $\bar S=1$.}
  \label{fig:kappa-delta}
\end{figure}

\FloatBarrier

\textbf{Incentive alignment.} The trap can arise because the decision-maker often does not bear the long-run cost of skill loss. A manager evaluated on current output discounts future worker capability more heavily than the worker, which leads to excessive AI use from the worker's perspective (Proposition~\ref{prop:overuse}). Extending the manager's evaluation horizon, or tying rewards to the development of the workers they supervise, could lead the manager to internalize more of those downstream consequences. We use the example of the short-termist manager for illustration, but misalignment can occur in any case where the decision-maker and the bearer of the skill costs differ. For example, the gap can also open between shareholders and managers, or within a single worker facing short-term pressure. As discussed previously, a source of misalignment can arise when workers value their skill beyond its value to the firm (Proposition~\ref{prop:outside-options}). 

\subsection{Performance Extraction and Skill Preservation}\label{sec:design}

We suggest that there are two different paradigms for organizational structures in the presence of deskilling AI. Some organizations may deploy AI in ways that draw on existing expertise while giving workers little opportunity to maintain or deepen it. Others may design AI use around the continued development of worker expertise. These two paradigms imply different paths for worker capability and for the long-run value of the deployment. The societal implications of this are interesting to consider given \citet{AcemogluAutorJohnson2026ProWorker}, who argue that markets undersupply pro-worker AI.

Call the first case \emph{performance extraction}. The worker uses their existing expertise in a workflow with heavy practice-displacing offloading. We consider this an extractive deployment because it draws on a stock of capability accumulated under the prior organization of work while doing little to maintain that stock. Such an organization could hire workers for expertise, assign them to workflows that offer little opportunity for skill development, and eventually replace them as their ability to judge AI output deteriorates. Although turnover is outside the model, this possibility points to a revolving-door employment model in which the organization's productivity comes from the consumption of expertise. In the case of misaligned incentives, such organizations represent an extreme case of the augmentation trap for workers. 

Call the second case \emph{skill preservation}. AI still changes the task and may eliminate substantial unaided work, but the workflow leaves the worker with the opportunity to develop the expertise that remains useful in AI-assisted production. \citet{ShenTamkin2026SkillFormation} offer some empirical support for this distinction, finding that delegating tasks to AI produced more skill loss than cognitively engaged use, although engaged use did not necessarily match the no-AI condition.

Just because a workflow formally requires human judgment does not mean that the worker is actually practicing that judgment. A nominally human-in-the-loop system remains extractive if the worker's interaction with the output is equivalent to rubber-stamping. Explainability can likewise create the appearance of engagement. Given the importance of individual differences, research from human-computer interaction will be useful for designing and personalizing workflows that sustain cognitive engagement. For example, \citet{JuAral2025Personality} find that how humans and AI agents are paired affects collaboration quality.

Short-run output is a poor way to distinguish performance extraction from skill preservation. A skill-preserving workflow may be slower, whereas an extractive workflow can raise measured output immediately by encouraging greater offloading and drawing on existing expertise. This type of extractive design will look superior because it is eroding the skill stock that sustains future performance. This measurement problem is emphasized by \citet{BondiJohnson2026}: when skill is endogenous to prior AI use, short-run estimates of assisted output can overstate the long-run value of a deployment. 

Regions~II and~IV in Figure~\ref{fig:nomathheatmap} should be read partly as organizational outcomes rather than inevitable consequences of using AI. A firm can change how any particular model is embedded in production, altering the effective values of $\alpha$ and $\beta$ and the resulting usage path. This interpretation also connects the model to the productivity J-curve for general-purpose technologies. In the standard account, firms realize the full gains from a new technology only after developing complementary investments \citep{BrynjolfssonRockSyverson2021}. In our setting, those complements can be understood as changing the effective $(\alpha,\beta)$ of a deployment. A firm may initially deploy an LLM as a generic Region II offloading tool. As the organization experiments with new ways of interacting with the tool, it may discover which tasks to delegate, when to require independent reasoning, and how to monitor errors. Codifying these discoveries can shift AI workflows into Region IV. 

In sum, an organizational risk is that AI allows firms to consume expertise faster than they replenish it. AI becomes a durable complement when workflow design preserves the practice through which valuable judgment is maintained; otherwise, AI can become an extractive technology that consumes worker expertise.

\subsection{Testable Predictions}\label{sec:predictions}

The model makes conditional predictions. When AI use displaces practice in tasks that remain important for supervising or improving output, heavier offloading should reduce subsequent unaided performance. This could be tested by examining worker behavior around temporary loss of access events or periods of unassisted work. 

The model also predicts that the relationship between worker skill and AI use depends on complementarity. On tasks for which expertise matters, more experienced workers should either use AI more intensively or gain more. When AI substitutes for expertise, less experienced workers should rely more heavily on the tool. An organization can infer the degree of complementarity or substitution of AI for skill by comparing usage and performance patterns across workers with different baseline expertise.

Because AI raises current output immediately while skill adjusts gradually, the model predicts a distinctive time path after adoption.  When skill atrophy occurs, the initial productivity gain may therefore give way to a deterioration in long-run capability. If workers later return to unaided practice, skill should recover over time rather than remain permanently fixed at its post-adoption level. 

The intensity of adoption should also depend on the user. Units managed for short-run throughput should choose heavier AI use than otherwise similar units evaluated on longer-run performance or worker development. Conversely, workers with larger private returns to expertise should prefer less intensive use. 

Finally, in deployments where AI substitutes strongly for expertise and the induced usage policy reaches the usage bounds, the model predicts skill divergence. The mechanism is threshold-based. Workers below the threshold $S_{eq}$ are pushed toward heavy offloading or full automation, so their unaided skill decays; workers above the threshold use the tool less and continue practicing, so skill recovers. Small differences in initial skill, or in decision-maker objectives that shift the threshold, can therefore produce large differences in long-run outcomes for workers near the threshold. 

A decline in unaided performance is evidence for skill atrophy, but such a situation may not indicate an augmentation trap. Some atrophy may be rational if the displaced skill has little future value, or if the early gains from AI compensate for lower unaided capability. Empirical work can try to understand what skills will remain useful to production with AI systems.  Measures such as revision depth, override rates, accepted suggestions, and error detection can help distinguish workflows that merely preserve assisted output from those that preserve the expertise needed for high-quality output.

\subsection{Limitations}\label{sec:limitations}

The model uses one skill dimension and fixed parameters so that the main parameter space can be solved in closed form. A central scope condition is that the deployment involves practice-displacing offloading: AI takes over activity that would otherwise build a skill that remains valuable for production with AI. An important simplifying assumption is the particular one-dimensional learning-forgetting process used to model this mechanism. This specification does not cover deployments in which AI actively builds the relevant skill. If AI builds productive skill rather than replacing the practice that builds it, the augmentation-trap outcome is mitigated. The one-dimensional skill representation also abstracts from variation in the quality of expertise. In the model, skill affects output only through its level, but in many settings the most important effect of expertise is on quality. These costs may appear first in quality-sensitive outcomes rather than in routine productivity measures.

The model also abstracts from skill recomposition. Because skill is represented by a single state variable, the model treats AI-induced capability change as movement along one dimension. In practice, AI may erode some skills while developing others. The effect of AI may be that delegation of routine coding allows for more practice of other skills such as system design. Future work could model skill as a vector, allowing AI to deplete some dimensions of expertise while building others that are useful for directing and supervising AI.

The parameters $(\alpha,\beta)$ are also treated as exogenous when in practice they change over time. The productivity gain parameters $\alpha$ and $\beta$ can change depending on model development and as users change how they interact with AI. For example, the post-adoption literature documents wide variation in how actively workers engage with systems, with many settling into narrow, routinized forms of use \citep{JaspersonEtAl2005,BurtonJonesStraub2006}. If engagement with AI output declines, effective $\beta$ falls even when the design of the tool is unchanged. Language models may also produce sycophantic responses rather than challenging users \citep{SharmaEtAl2024Sycophancy}. When the tool affirms weak reasoning, workers may become complacent, which may cause effective $\beta$ to fall further. A deployment can therefore move across regions over time. Routinized use and declining engagement may lower effective $\beta$ and push a deployment into a steady-state loss region, while complementary investments in workflow design, measurement, and training may raise effective $\beta$ and move a deployment from Region II toward Region IV.

The analysis also abstracts from task specialization and labor markets. If AI erodes a skill across many workers at once, the market value of that skill changes, which can alter the optimal policy. Training pipelines such as residency programs depend on experienced practitioners who can teach. If enough experienced workers deskill, there may eventually be too few left to train the next cohort. \citet{AcemogluKongOzdaglar2026KnowledgeCollapse} study a related problem, showing how AI can degrade a shared knowledge stock when individual firms do not internalize the externality. 

\section{Conclusion}\label{sec:conclusion}
A distinctive feature of current AI systems is the flexibility of the boundary between human effort and automated assistance. Many earlier technologies automated relatively well-defined tasks. AI can instead be introduced at many points in cognitive work, so the extent of offloading is shaped by workers, managers, and workflow design. In principle, this flexibility can create complementarity between human and artificial intelligence. In practice, the pressure to produce can lead organizations to offload the activities through which workers acquire and maintain the expertise needed to use AI well. Our model studies the consequences of where this boundary is set. Our analysis suggests that even with full awareness of skill atrophy, AI productivity may come with potentially hidden impacts such as reduced productivity, the deskilling of augmented workers, and divergence in the worker population. 

These results do not imply that skill atrophy is always harmful or that AI adoption should be restrained in general. Some skills will become obsolete, and preserving them would have little value. Our concerns about the augmentation trap arise when the intensity of tool usage excessively erodes valuable expertise. Current organizations do not yet know how to mitigate such costs.

Our framework identifies theoretical conditions under which these problems can arise, but practical diagnostics remain underdeveloped. Organizations typically measure AI productivity increases; few measure worker skill and its role in productivity with AI systems. Developing such measures should be a priority for researchers and firms. Workflow design, training, and incentive alignment may reduce these risks if they preserve relevant forms of expertise. Workers should be wary of how their AI workflows affect their skills, firms should consider measuring skill directly, and regulators should pay attention to the long-term effects of AI on expertise.

\paragraph*{Acknowledgments.}
We thank Dean Eckles, Glenn Ellison, Anna Stansbury, Thomas Ma, Haiwen Li, and Peyman Shahidi, as well as members of the Social Analytics Lab, for encouraging and insightful conversations.

\bibliography{refs}

\newpage
\appendix
\renewcommand{\thesection}{EC.\arabic{section}}
\renewcommand{\theequation}{EC.\arabic{equation}}
\renewcommand{\thefigure}{EC.\arabic{figure}}
\renewcommand{\thetable}{EC.\arabic{table}}
\renewcommand{\theproposition}{EC.\arabic{proposition}}
\renewcommand{\thelemma}{EC.\arabic{lemma}}
\renewcommand{\thecorollary}{EC.\arabic{corollary}}
\renewcommand{\theremark}{EC.\arabic{remark}}
\renewcommand{\thedefinition}{EC.\arabic{definition}}
\renewcommand{\theassumption}{EC.\arabic{assumption}}
\setcounter{equation}{0}
\setcounter{figure}{0}
\setcounter{table}{0}
\setcounter{proposition}{0}
\setcounter{lemma}{0}
\setcounter{section}{0}
\setcounter{remark}{0}
\setcounter{definition}{0}
\setcounter{assumption}{0}
\setcounter{corollary}{0}

\section{Dynamic Model: Solution and Proofs}\label{app:proofs}

This appendix contains all proofs for the dynamic model. The production function is $p = (1-u)S + [\alpha+\beta S - \gamma u]u$, which expands to $p = S + [\alpha + (\beta-1)S - \gamma u]u$. The term $\beta-1$ appears throughout the closed-form solutions because it captures the net effect of AI on the skill-dependent component: $\beta S$ gained through complementarity minus $S$ displaced from human contribution.
Section~\ref{app:extensions} presents extensions.

\subsection{Core Solution (Proof of Lemma~\ref{lem:quad-value})}\label{app:lemma}

\begin{proof}[Proof of Lemma~\ref{lem:quad-value}]
Expanding $p = (1-u)S + [\alpha+\beta S - \gamma u]u = S + [\alpha + (\beta-1)S - \gamma u]u$, the Hamilton--Jacobi--Bellman equation is
\begin{equation}
\label{equation:bellman}
    \delta V(S) =
    \max_{u}\; S + \alpha u - \gamma u^2 + (\beta-1) S\, u + V'(S)\,\kappa[\bar S(1-u) - S].
\end{equation}
The first-order condition for $u$ yields the optimal usage policy:
\begin{equation}\label{eq:foc}
  u^*(S) = \frac{\alpha + (\beta-1) S - \kappa \bar S\, V'(S)}{2\gamma}.
\end{equation}

Conjecture a quadratic value function $V(S) = aS^2 + bS + c$.
Then $u^*(S)$ is linear in $S$, and substituting into \eqref{equation:bellman} gives
\begin{eqnarray*}
    \delta (aS^2+bS+c) &=& S+\frac{[\alpha+(\beta-1) S-\kappa \bar S(2aS+b)]^2}{4\gamma}\\
    &&+\,(2aS+b)\,\kappa[\bar S-S].
\end{eqnarray*}

Matching the $S^2$, $S^1$, and $S^0$ coefficients:
\begin{align}
\delta a &= \frac{[(\beta-1)-2\kappa a\bar S]^2}{4\gamma}-2a\kappa,
\label{eq:coeff-a}\\[4pt]
\delta b &= 1+\frac{(\alpha-\kappa b\bar S)[(\beta-1)-2\kappa a\bar S]}{2\gamma}+2a\kappa \bar S-b\kappa,
\label{eq:coeff-b}\\[4pt]
\delta c &=
\frac{(\alpha-\kappa b\bar S)^2}{4\gamma}+b\kappa\bar S.
\label{eq:coeff-c}
\end{align}

Solving for the coefficients (selecting the stable root for $a$):
\begin{align}
a &= \frac{\kappa((\beta-1)\bar S+2\gamma)+\gamma\delta-\sqrt{D}}{2\kappa^2\bar S^2},
\label{eq:a-closed}\\[4pt]
b &= \frac{2\gamma(1+2a\kappa\bar S)+\alpha((\beta-1)-2\kappa a\bar S)}{2(\delta+\kappa)\gamma+\kappa(\beta-1)\bar S-2\kappa^2 a \bar S^2},
\label{eq:b-closed}\\[4pt]
c &=
\frac{(\alpha-\kappa b\bar S)^2}{4\gamma\delta}+\frac{b\kappa\bar S}{\delta},
\label{eq:c-closed}
\end{align}
where
\begin{equation}\label{eq:discriminant}
D = \bigl[\kappa((\beta-1)\bar S+2\gamma)+\gamma\delta\bigr]^2 - \kappa^2(\beta-1)^2\bar S^2
  = \gamma(\delta+2\kappa)\bigl(2\kappa(\beta-1)\bar S + \gamma(\delta+2\kappa)\bigr).
\end{equation}
\end{proof}

\begin{remark}[Linearity in $\alpha$]\label{rem:linearity}
Because $a$ in~\eqref{eq:a-closed} depends on $\beta-1$, $\gamma$, $\kappa$, $\delta$, and $\bar S$ but not on $\alpha$, the denominator of $b$ in~\eqref{eq:b-closed} is likewise $\alpha$-free; hence $b$ is linear in $\alpha$. It follows that $u_0$, $u_1$, the steady-state skill $\hat S$, and the steady-state usage $\hat u$ are all linear in $\alpha$ for each $\beta$. The adoption frontier $C_0$ ($\hat u=0$) and the automation frontier $C_1$ ($\hat u=1$) are exactly straight lines in $(\alpha,\beta)$ space. At $C_0$, substituting $u^*(\bar S)=0$ into \eqref{eq:coeff-a}--\eqref{eq:coeff-b} gives $2a\bar S+b = 1/(\delta+\kappa)$, the no-AI marginal value of skill, so $C_0$ is the line $\alpha+(\beta-1)\bar S = \kappa\bar S/(\delta+\kappa)$. At $C_1$, substituting $u_0=1$ gives $b=\beta/(\delta+\kappa)$, the marginal value of skill under full automation, so $C_1$ is the line $\alpha = 2\gamma+\beta\kappa\bar S/(\delta+\kappa)$. For each $\beta$, $\Delta\hat V$ is quadratic in $\alpha$, so the break-even boundary $B$ crosses each horizontal line at most twice.  In other words, $\alpha$ shifts usage uniformly across skill levels.  All nonlinear interaction between AI and skill enters through $\beta-1$ alone.
\end{remark}

\subsection{Proof of Proposition~\ref{prop:u-increasing}} \label{Proof_Prop_u_increasing}

\begin{proof}[Proof of Proposition~\ref{prop:u-increasing}]
From Lemma~\ref{lem:quad-value}, when $\beta>1$ and an interior policy is optimal, the value
function is quadratic,
\[
V(S) \;=\; aS^2 + bS + c,
\]
and the optimal usage policy is linear in skill,
\[
u^*(S) \;=\; u_0 + u_1 S
     \;=\; \frac{\alpha + ((\beta-1) - 2\kappa a\bar S)S - \kappa b\bar S}{2\gamma},
\]
so that
\[
\frac{du^*(S)}{dS} \;=\; u_1 \;=\; \frac{(\beta-1) - 2\kappa a\bar S}{2\gamma}.
\]
Thus it suffices to show that $(\beta-1) - 2\kappa a\bar S > 0$ for the stable interior
solution.

From \eqref{eq:a-closed}, the quadratic coefficient $a$ is
\[
a \;=\;
\frac{\kappa((\beta-1)\bar S + 2\gamma) + \gamma\delta
      - \sqrt{D}}{2\kappa^2\bar S^2},
\]
with discriminant $D$ as in~\eqref{eq:discriminant}.
Multiplying by $2\kappa\bar S$ gives
\[
2\kappa a\bar S \;=\;
\frac{\kappa((\beta-1)\bar S + 2\gamma) + \gamma\delta - \sqrt{D}}{\kappa\bar S},
\]
so
\[
(\beta-1) - 2\kappa a\bar S
= \frac{(\beta-1)\kappa\bar S
        - \kappa((\beta-1)\bar S + 2\gamma)
        - \gamma\delta
        + \sqrt{D}}{\kappa\bar S}
= \frac{\sqrt{D} - \gamma(\delta + 2\kappa)}{\kappa\bar S}.
\]

Next, factor $D$:
\begin{align*}
D
&= \bigl[\kappa((\beta-1)\bar S + 2\gamma) + \gamma\delta\bigr]^2
   - \kappa^2(\beta-1)^2\bar S^2 \\
&= \gamma(\delta + 2\kappa)\bigl(2\kappa(\beta-1)\bar S
                                + \gamma(\delta + 2\kappa)\bigr).
\end{align*}
Since $\gamma>0$, $\kappa>0$, $\bar S>0$, and $\beta-1>0$, we have
\[
D \;>\; \gamma^2(\delta + 2\kappa)^2
\quad\Rightarrow\quad
\sqrt{D} \;>\; \gamma(\delta + 2\kappa).
\]
Therefore
\[
(\beta-1) - 2\kappa a\bar S
= \frac{\sqrt{D} - \gamma(\delta + 2\kappa)}{\kappa\bar S} \;>\; 0,
\]
and hence
\[
u_1 \;=\; \frac{(\beta-1) - 2\kappa a\bar S}{2\gamma} \;>\; 0.
\]

Thus the interior optimal policy $u^*(S)$ is strictly increasing in skill $S$
when $\beta>1$.
\end{proof}

\begin{proof}[Proof of Proposition~\ref{prop:u-decreasing}]
The same argument applies with the inequality reversed. When $\beta<1$, the interior quadratic solution exists provided $D > 0$, which requires $2\kappa(1-\beta)\bar S < \gamma(\delta+2\kappa)$. (When this condition fails, the policy hits the boundary and the stratification analysis of Section~\ref{app:kcurve} applies instead.) Given $D > 0$, the bracketed term $2\kappa(\beta-1)\bar S + \gamma(\delta+2\kappa)$ in the factorization of $D$ is strictly less than $\gamma(\delta+2\kappa)$, so $D < \gamma^2(\delta+2\kappa)^2$ and $\sqrt{D} < \gamma(\delta+2\kappa)$. Therefore $(\beta-1) - 2\kappa a\bar S = [\sqrt{D} - \gamma(\delta+2\kappa)]/(\kappa\bar S) < 0$, giving $u_1 < 0$.
\end{proof}

\begin{remark}[Usage is decreasing in potential when $\beta>1$]\label{rem:usage-potential}
Fix current skill $S\ge 0$ and suppose $\beta>1$ with an interior optimal policy. Then $\partial u^*(S;\bar S)/\partial\bar S < 0$: holding current skill fixed, a worker with higher potential uses less AI.
\end{remark}

\begin{proof}
From the first-order condition, $u^*(S;\bar S) = [\alpha+(\beta-1)S - \kappa (b\bar S) - 2\kappa (a\bar S) S]/(2\gamma)$, so it suffices to show that $a\bar S$ and $b\bar S$ are both strictly increasing in $\bar S$. Write $c_1 := \gamma(\delta+2\kappa)$ and $R := \sqrt{D}$, so that $R^2 - c_1^2 = 2\kappa(\beta-1)\bar S c_1$ by \eqref{eq:discriminant}, and note $dR/d\bar S = \kappa(\beta-1)c_1/R > 0$.

Since $\kappa((\beta-1)\bar S+2\gamma)+\gamma\delta = \kappa(\beta-1)\bar S + c_1 = (R^2-c_1^2)/(2c_1) + c_1$, the closed form \eqref{eq:a-closed} simplifies to
\[
a\bar S \;=\; \frac{(R-c_1)^2}{4\kappa^2 c_1 \bar S},
\qquad\text{and direct differentiation gives}\qquad
\frac{d(a\bar S)}{d\bar S} \;=\; \frac{(R-c_1)^2}{4\kappa^2\bar S^2 R} \;>\; 0.
\]

For $b$, the identities $\kappa\bar S\bigl(\beta-1-2\kappa a\bar S\bigr) = R-c_1$ and $2\gamma(\delta+\kappa)-c_1 = \gamma\delta$ let the closed form \eqref{eq:b-closed} be rewritten as
\[
b\bar S \;=\; \frac{2\gamma\bar S + \gamma(R-c_1)^2/(\kappa c_1) + \alpha(R-c_1)/\kappa}{\gamma\delta+R}.
\]
Differentiating with respect to $\bar S$, the numerator of $d(b\bar S)/d\bar S$ equals
\[
\frac{\gamma\bigl(R^2+2\gamma\delta R+c_1^2\bigr)}{R}
\;+\; \frac{\gamma(\beta-1)(R-c_1)\bigl(R+c_1+2\gamma\delta\bigr)}{R}
\;+\; \frac{\alpha(\beta-1)c_1(\gamma\delta+c_1)}{R},
\]
and each term is strictly positive for $\beta>1$ (where $R>c_1$). Hence $d(b\bar S)/d\bar S>0$, and
\[
\frac{\partial u^*}{\partial\bar S} \;=\; -\frac{\kappa}{2\gamma}\left[\frac{d(b\bar S)}{d\bar S} + 2S\,\frac{d(a\bar S)}{d\bar S}\right] \;<\; 0.
\]
\end{proof}

\subsection{Proof of Proposition~\ref{prop:overuse}}\label{app:overuse}

\begin{proof}
We show that $\partial V_S(S;\delta)/\partial\delta < 0$ for all $S\in[0,\hat S]$,
which by the first-order condition implies $u^*(S;\delta)$ is strictly increasing in~$\delta$.

At the optimum the HJB reads
\[
  \delta\, V(S) \;=\; p\bigl(S, u^*(S)\bigr) + V'(S)\,\dot S(S).
\]
Differentiate both sides with respect to $\delta$.  By the envelope theorem
the terms involving $\partial u^*/\partial\delta$ vanish, giving
\begin{equation}\label{eq:envelope}
  V(S) + \delta\, V_\delta(S) \;=\; V_{S\delta}(S)\,\dot S(S),
\end{equation}
where $V_\delta = \partial V/\partial\delta$ and $V_{S\delta} = \partial^2 V/\partial S\,\partial\delta$.

Since $V(S) = aS^2 + bS + c$ and $\dot S = \hat\kappa(\hat S - S)$ with
$\hat\kappa = \kappa(1 + u_1\bar S)$, both sides of~\eqref{eq:envelope} are
quadratic in~$S$.  Writing $V_\delta(S) = a_\delta S^2 + b_\delta S + c_\delta$
and matching the $S^2$ coefficient:
\[
  a + \delta\, a_\delta \;=\; -2\hat\kappa\, a_\delta,
  \qquad\text{so}\qquad
  a_\delta \;=\; \frac{-a}{\delta + 2\hat\kappa}.
\]

From $u_1 = \bigl(\sqrt{D} - \gamma(\delta+2\kappa)\bigr)/(2\gamma\kappa\bar S)$
where $D = \gamma(\delta+2\kappa)\bigl(2\kappa(\beta-1)\bar S + \gamma(\delta+2\kappa)\bigr)$,
a direct calculation gives
\[
  \delta + 2\hat\kappa
  = \delta + 2\kappa + \frac{\sqrt{D} - \gamma(\delta+2\kappa)}{\gamma}
  = \frac{\sqrt{D}}{\gamma},
\]
so $a_\delta = -\gamma a/\sqrt{D}$.
Since $D = A^2 - \kappa^2(\beta-1)^2\bar S^2 < A^2$ for $\beta\neq 1$ (where $A:=\kappa((\beta-1)\bar S+2\gamma)+\gamma\delta>0$), the stable root satisfies $a>0$ for all $\beta\neq 1$, giving $a_\delta < 0$.

Matching the $S^1$ coefficient in~\eqref{eq:envelope} gives
\[
  b_\delta = \frac{2a_\delta\,\hat\kappa\,\hat S - b}{\delta + \hat\kappa}.
\]

We need $w(S) \equiv V_{S\delta}(S) = 2a_\delta S + b_\delta < 0$ for all $S \in [0, \hat S]$.  Since $a_\delta < 0$,
the function $w$ is decreasing in~$S$, so it suffices to check $w(0) = b_\delta < 0$.

It remains to find the sign of $b$. A higher-skill worker can always follow the policy of a lower-skill worker; under the law of motion the skill gap then decays exponentially to zero while the higher-skill worker stays ahead. Because output is increasing in skill, the higher-skill worker's value under this suboptimal policy exceeds the lower-skill worker's, and the value under the optimal policy is higher. Hence $V$ is increasing in $S$, and differentiating at $S=0$ gives $b>0$.

\medskip\noindent
Since $a_\delta < 0$ and $b > 0$, the numerator of $b_\delta$ is $2a_\delta\hat\kappa\hat S - b < 0$, so $w(0) = b_\delta < 0$. Because $w$ is decreasing, $w(S) \le w(0) < 0$ for all $S \in [0,\hat S]$. That is,
$\partial V_S(S;\delta)/\partial\delta < 0$ for all~$S$,
so $u^*(S;\delta_F) > u^*(S;\delta_W)$ when $\delta_F > \delta_W$. The argument is identical for $\beta > 1$ and $\beta < 1$, since $a > 0$ in both cases.

Hence $u^*(S;\delta)$ is increasing in $\delta$. Higher usage under $\delta_F > \delta_W$ implies lower steady-state skill $\hat S(\delta_F) < \hat S(\delta_W)$. The next lemma compares the two loss regions.
\end{proof}

\begin{lemma}[The loss region grows with the discount rate]\label{lem:loss-region}
Fix $\gamma,\kappa,\bar S>0$ and $\delta_F>\delta_W>0$.
\begin{enumerate}
  \item At any $(\alpha,\beta)$ where both policies are interior with stable steady states, if the worker's policy produces steady-state loss, $p(\hat S(\delta_W),\hat u(\delta_W))<\bar S$, then so does the firm's.
  \item There is a nonempty set of $(\alpha,\beta)$ at which the firm's policy produces steady-state loss and the worker's produces none.
\end{enumerate}
Hence the firm's steady-state loss region strictly contains the worker's.
\end{lemma}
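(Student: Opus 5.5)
The plan is to compare the two steady states through the modified golden rule rather than through the closed forms for $a$ and $b$. Write steady-state output as a function of steady-state usage along the locus $\hat S=\bar S(1-\hat u)$:
\[
y(u)=\bar S(1-u)^2+\alpha u-\gamma u^2+\beta\bar S(1-u)u ,
\]
so $y^\star(\delta)=y(\hat u(\delta))$ and $y(0)=\bar S$. Proposition~\ref{prop:overuse} gives $\hat S(\delta_F)<\hat S(\delta_W)$, hence $\hat u(\delta_F)>\hat u(\delta_W)$. It therefore suffices to show that $y^\star(\delta)$ is strictly decreasing in $\delta$ wherever the policy is interior and stable. Part~1 then follows at once: $y^\star(\delta_F)<y^\star(\delta_W)<\bar S$.

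\textbf{Monotonicity of $y^\star(\delta)$.} At an interior steady state the first-order condition \eqref{eq:foc} gives $p_u(\hat S,\hat u)=\kappa\bar S\,V'(\hat S)$. I will pin down $V'(\hat S)$ by the costate (envelope) condition. Differentiate the HJB at the optimum in $S$ and evaluate at $\dot S=0$ to get $(\delta+\kappa)V'(\hat S)=p_S(\hat S,\hat u)$. The term $V''\dot S$ vanishes there, and the $\partial u^*/\partial S$ terms drop by the envelope theorem. Hence
\[
p_u=\frac{\kappa\bar S\,p_S}{\delta+\kappa},\qquad p_S=1-\hat u+\beta\hat u>0 .
\]
Differentiating $y$ along the locus gives $y'(u)=p_u-\bar S\,p_S$. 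At the chosen point this becomes
\[
y'(\hat u)=-\frac{\delta}{\delta+\kappa}\,\bar S\,p_S<0 .
\]
The steady state is linear in $\alpha$ and smooth in $\delta$ (Remark~\ref{rem:linearity} and the closed forms), and $\hat u(\delta)$ is strictly increasing by Proposition~\ref{prop:overuse}. The chain rule then gives
\[
\frac{dy^\star}{d\delta}=y'(\hat u(\delta))\,\frac{d\hat u}{d\delta}<0
\]
on the whole interval $[\delta_W,\delta_F]$. The one point needing care is that both policies stay interior and stable for every intermediate $\delta$. If interiority holds only at the endpoints, I would restrict to the connected set of $\delta$ where it holds, or simply assume it as part of ``both policies interior'' along the path.

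\textbf{Part 2.} I will construct a point explicitly in the skill-neutral case $\beta=1$. There Proposition~\ref{prop:trap-beta0} gives the break-even value $\alpha_1(\delta)=(2\delta+\kappa)\bar S/(\delta+\kappa)$, which is increasing in $\delta$. Choose $\gamma$ large enough that $\gamma>\gamma^*(\delta_F)=\delta_F\bar S/(\delta_F+\kappa)$, so that $\alpha_1(\delta)<\alpha_2(\delta)$ for both discount rates (Remark~\ref{rem:alpha-order}). Then every $\alpha\in(\alpha_1(\delta_W),\alpha_1(\delta_F))$ has the following properties:
\begin{itemize}
\item both policies are interior, since $\alpha>\alpha_0(\delta)$ for both discount rates and, using $\alpha_1(\delta_F)<\alpha_2(\delta_F)\le\alpha_2(\delta_W)$ (since $\alpha_2$ is decreasing in $\delta$), $\alpha<\alpha_2(\delta)$ for both as well;
\item the worker's steady state is a gain;
\item the firm's steady state is a loss.
\end{itemize}
Since $\beta=1$ gives $u_1=0$, stability $1+u_1\bar S>0$ is automatic. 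Continuity in $\beta$ then extends this interval to an open set of $(\alpha,\beta)$.

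The main obstacle I expect is the part~1 step establishing $y'(\hat u)<0$ rigorously, namely justifying the costate identity $(\delta+\kappa)V'(\hat S)=p_S$ for the quadratic value function. I would verify it directly from the coefficient equations \eqref{eq:coeff-a}--\eqref{eq:coeff-b} evaluated at $S=\hat S$. A naive argument based on the sign of the curvature of $y$ fails when $(1-\beta)\bar S>\gamma$, because $y$ is then convex in $u$. This is why the golden-rule route is needed.
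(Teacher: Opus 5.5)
Your Part~1 is correct and takes a different route from the paper. The paper factors $y^\star-\bar S=\hat u\,(c_1+c_2\hat u)$ and splits on the sign of $c_2=-(\gamma+(\beta-1)\bar S)$. When $c_2\le 0$, the per-unit net gain is nonincreasing in usage, so loss at $\hat u_W$ persists at the larger $\hat u_F$. When $c_2>0$, it uses the automation line $\alpha=2\gamma+\beta\kappa\bar S/(\delta+\kappa)$ to show that every interior steady state is a loss. Your costate identity $(\delta+\kappa)V'(\hat S)=p_S(\hat S,\hat u)$ checks out: differentiate the coefficient-matched HJB, use the FOC, and set $\dot S=0$. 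The resulting modified golden rule $y'(\hat u)=-\delta\bar S p_S/(\delta+\kappa)<0$ handles both curvature cases at once. It also proves more, namely that steady-state output is strictly lower under the firm's policy whether or not the worker's steady state is a loss. Your worry about intermediate discount rates is unnecessary. Since $y$ is quadratic in $u$, $y'$ is affine. The inequalities $y'(\hat u_W)<0$ and $y'(\hat u_F)<0$, each from the golden rule at its own $\delta$, therefore already give $y'<0$ on $[\hat u_W,\hat u_F]$.

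The gap is in Part~2. You begin by choosing $\gamma>\gamma^*(\delta_F)$, but the lemma fixes $\gamma$ together with $\kappa,\bar S,\delta_F,\delta_W$ and asks only for $(\alpha,\beta)$. For $\gamma\le\gamma^*(\delta_W)=\delta_W\bar S/(\delta_W+\kappa)$, your construction has nothing to work with at $\beta=1$:
\begin{itemize}
\item By Remark~\ref{rem:alpha-order}, the worker has no interior gain region, since $\alpha_2(\delta_W)\le\alpha_1(\delta_W)$.
\item A worker gain under full automation does not separate the two policies. Because $\alpha_2$ is decreasing in $\delta$, the firm also fully automates, with the same output $\alpha-\gamma$.
\end{itemize}
The missing observation is that Part~2 does not require the worker's policy to be interior: a worker who does not adopt has no loss. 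The paper places $\alpha+(\beta-1)\bar S$ strictly between the adoption thresholds $\kappa\bar S/(\delta_F+\kappa)$ and $\kappa\bar S/(\delta_W+\kappa)$. At $\beta=1$ this means $\alpha\in(\alpha_0(\delta_F),\alpha_0(\delta_W))$. The worker's steady state is then $\bar S$, while the firm adopts. Since $\alpha<\alpha_0(\delta_W)<\bar S<\alpha_1(\delta_F)$, the firm's steady state is a loss whether it is interior (Proposition~\ref{prop:trap-beta0}) or fully automated ($\alpha-\gamma<\bar S$). This works for every $\gamma$.

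Separately, the concluding containment claim also needs the case where the worker's policy is interior with a loss but the firm's reaches full automation. Your interior-only Part~1 does not cover this. The paper closes it by showing that worker loss forces $\alpha<\gamma+\bar S$, so the firm's full-automation output $\alpha-\gamma$ lies below $\bar S$.
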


\begin{proof}
Write $\hat u(\delta)$ for steady-state usage and $y^\star(\hat u)$ for steady-state output as in \eqref{eq:y-steady}. Because output is constant at a steady state, its value there is $y^\star/\delta$ at any discount rate, so steady-state loss is the flow comparison $y^\star<\bar S$ and depends on $\delta$ only through the policy.

\emph{Step 1: $\hat u$ is increasing in $\delta$.} Steady-state usage solves $g(u):=u^*(\bar S(1-u);\delta)-u=0$, and $g'(u)=-(1+u_1\bar S)<0$ by stability. Since $u^*(S;\delta_F)>u^*(S;\delta_W)$ at every $S$ in the interior region, $g_F(\hat u(\delta_W))>0$, and because $g_F$ is decreasing, $\hat u(\delta_F)>\hat u(\delta_W)$.

\emph{Step 2: output relative to the benchmark.} Substituting $\hat S=\bar S(1-\hat u)$ into \eqref{eq:y-steady}, steady-state output differs from the no-AI benchmark by the productivity effect of AI usage, net of the human contribution that usage displaces and the output lost to atrophy:
\[
y^\star(\hat u)-\bar S
=\hat u\bigg[\;\underbrace{\alpha+\beta\hat S-\gamma\hat u}_{\substack{\text{AI productivity effect}\\ \text{per unit of usage}}}
\;-\;\underbrace{\hat S}_{\substack{\text{human contribution}\\ \text{displaced}}}
\;-\;\underbrace{\bar S}_{\substack{\text{output lost}\\ \text{to atrophy}}}\;\bigg].
\]
Collecting terms in $\hat u$, write this as $y^\star(\hat u)-\bar S=\hat u\, g(\hat u)$, where
\[
g(\hat u):=c_1+c_2\hat u,
\qquad
c_1=\alpha+(\beta-1)\bar S-\bar S,
\qquad
c_2=-\big(\gamma+(\beta-1)\bar S\big)
\]
is the net gain per unit of usage. The intercept $c_1$ is the local loss condition of Section~\ref{app:robustness}, the tool's net gain for a full-skill worker against the output a unit of usage displaces. The slope $c_2$ collects the losses that grow with usage: diminishing returns, and the complementarity destroyed as skill erodes. Under an adopting policy ($\hat u>0$), the steady state falls below the benchmark exactly when the net gain per unit of usage is negative.

\emph{Step 3: the case $c_2\le 0$.} The per-unit net gain $g$ is nonincreasing in usage. If it is negative at $\hat u(\delta_W)$, it is negative at the larger $\hat u(\delta_F)$, so the firm's policy also produces loss.

\emph{Step 4: the case $c_2>0$.} Here every interior policy produces loss, so the firm's policy produces loss whenever the worker's does. For any $\hat u<1$ the bracketed term is bounded above by $c_1+c_2=\alpha-\gamma-\bar S$, so a long-run gain would require $\alpha>\gamma+\bar S$. But $u_0$ is strictly increasing in $\alpha$, with $\partial u_0/\partial\alpha=(\delta+\kappa)/(\gamma\delta+\sqrt D)>0$ from \eqref{eq:b-closed}, and $u_0=1$ exactly on the automation line $\alpha=2\gamma+\beta\kappa\bar S/(\delta+\kappa)$ of Remark~\ref{rem:linearity}. An interior policy therefore requires $\alpha<2\gamma+\beta\kappa\bar S/(\delta+\kappa)$, and when $c_2>0$, that is when $\gamma<(1-\beta)\bar S$,
\[
2\gamma+\frac{\beta\kappa\bar S}{\delta+\kappa}-(\gamma+\bar S)
=\gamma-\bar S+\frac{\beta\kappa\bar S}{\delta+\kappa}
<-\beta\bar S+\frac{\beta\kappa\bar S}{\delta+\kappa}
=-\frac{\beta\bar S\delta}{\delta+\kappa}\le 0.
\]
So $\alpha<\gamma+\bar S$, and the bracketed term is negative at any interior usage level.

\emph{Step 5: Strictness.} Let
\[
A(\delta) \equiv \frac{\kappa \bar S}{\delta+\kappa}
\]
denote the threshold on the adoption frontier \(C_0\). Since \(A'(\delta)<0\), the assumption
\(\delta_F>\delta_W\) implies
\[
A(\delta_F)<A(\delta_W).
\]
Therefore there exist parameter values satisfying
\[
A(\delta_F)
<
\alpha+(\beta-1)\bar S
<
A(\delta_W).
\]
For such values, the firm's policy has positive steady-state usage, \(\hat u_F>0\), while the worker's
policy has zero steady-state usage, \(\hat u_W=0\). Moreover, by choosing
\(\alpha+(\beta-1)\bar S\) arbitrarily close to \(A(\delta_F)\) from above, \(\hat u_F\) can be made
arbitrarily small.

For these same parameter values,
\[
c_1
=
\alpha+(\beta-1)\bar S-\bar S
<
A(\delta_W)-\bar S
<0,
\]
because \(A(\delta_W)<\bar S\). Let
\[
g(u) \equiv c_1+c_2u .
\]
Since \(g(0)=c_1<0\), continuity implies that \(g(u)<0\) for all sufficiently small \(u>0\).
Choosing \(\alpha+(\beta-1)\bar S\) close enough to \(A(\delta_F)\) gives \(g(\hat u_F)<0\). Hence
\[
y^\star(\hat u_F)-\bar S
=
\hat u_F g(\hat u_F)
<0,
\]
while under the worker's policy,
\[
y^\star(\hat u_W)-\bar S
=
y^\star(0)-\bar S
=
0.
\]
Thus there exist parameter values for which the firm's policy produces steady-state loss while the
worker's own policy does not. The firm's loss region therefore strictly contains the worker's loss
region.

The conclusion also survives if the firm's policy reaches full automation. Worker-side loss requires $\alpha<\gamma+\bar S$: when $c_2\le 0$, interior loss needs $\hat u>c_1/(-c_2)$, and $\hat u<1$ then forces $c_1<-c_2$; when $c_2>0$, Step~4 applies. Full automation then yields steady-state output $\alpha-\gamma<\bar S$, again a loss.
\end{proof}

\subsection{Proof of Proposition~\ref{prop:outside-options}}\label{app:outside-options}

\begin{proof}
With the worker skill externality flow value $\omega S$, the worker's Bellman equation is
\[
\delta V(S) = \max_{u}\left\{(1+\omega)S + (\alpha+(\beta-1) S - \gamma u)u + V'(S)[\kappa\bar S(1-u) - \kappa S]\right\}.
\]
The first-order condition for $u$ is unchanged from the baseline:
\[
u^*(S) = \frac{\alpha + (\beta-1) S - \kappa\bar S V'(S)}{2\gamma}.
\]
Consider $V(S) = a(\omega)S^2 + b(\omega)S + c(\omega)$. Substituting and matching the $S^2$ coefficient gives the same equation as in the baseline, so $a(\omega) = a$ (independent of $\omega$).

Matching the $S^1$ coefficient yields an equation that is linear in $b$ with $\omega$ entering additively through the $(1+\omega)$ term:
\[
\delta b = (1+\omega) + ((\beta-1) - 2\kappa a\bar S)\frac{\alpha - \kappa\bar S b}{2\gamma} + 2a\kappa\bar S - b\kappa.
\]
Collecting all terms involving $b$ on the left-hand side and solving, $b(\omega)$ is linear in $\omega$ with
\[
\frac{\partial b}{\partial\omega} = \frac{2\gamma}{\gamma\delta + \sqrt{D}},
\]
where $D = [\kappa((\beta-1)\bar S + 2\gamma)+\gamma\delta]^2 - \kappa^2(\beta-1)^2\bar S^2$ as before.\footnote{To verify: set $\beta=1$, so $a=0$, $D = \gamma^2(\delta+2\kappa)^2$, $\sqrt{D} = \gamma(\delta+2\kappa)$, and $\partial b/\partial\omega = 2\gamma/(2\gamma\delta + 2\gamma\kappa) = 1/(\delta+\kappa)$. This matches $b = (1+\omega)/(\delta+\kappa)$ from the $\beta=1$ solution.}

Since $u_0 = (\alpha - \kappa\bar S b)/(2\gamma)$, we have $\partial u_0/\partial\omega = -(\kappa\bar S)/(2\gamma) \cdot \partial b/\partial\omega$, giving
\[
u_\omega := -\frac{\partial u_0}{\partial\omega} = \frac{\kappa\bar S}{\gamma\delta + \sqrt{D}} > 0.
\]

The steady-state skill $\hat S(\omega) = \bar S(1-u_0+u_\omega\omega)/(1+u_1\bar S)$ is strictly increasing in $\omega$ provided $1+u_1\bar S > 0$ (the stable interior condition), because $u_\omega > 0$.
\end{proof}

\subsection{Proof of Proposition~\ref{prop:trap-conditions}}\label{app:trap-conditions}

\begin{proof}

\textit{Aligned objectives ($\delta_F=\delta_W$, $\omega=0$).} The worker is the decision-maker.  The constant policy $u\equiv 0$ is feasible and yields
$V_W^{\text{no-AI}} = \bar S/\delta_W$.  Since $u^*_W$ maximizes $V_W$, we have
$V_W(u^*_W)\ge V_W^{\text{no-AI}}$, with equality if and only if $u^*_W=0$
(i.e., AI is not adopted).

\textit{Discount-rate divergence ($\delta_F>\delta_W$, $\omega=0$).} From Proposition~\ref{prop:overuse}, $u^*(S;\delta)$ is strictly increasing in~$\delta$ for every~$S$.
In particular, $u_0(\delta_F) > u_0(\delta_W)$ and $u_1(\delta_F) > u_1(\delta_W)$, because $\partial u_0/\partial\delta = -\kappa\bar S\,b_\delta/(2\gamma) > 0$ (since $b_\delta < 0$) and $\partial u_1/\partial\delta = -\kappa\bar S\,a_\delta/\gamma > 0$ (since $a_\delta < 0$).

The steady-state skill is $\hat S(\delta) = \bar S(1-u_0(\delta))/(1+u_1(\delta)\bar S)$.
Since both $u_0$ and $u_1$ are increasing in~$\delta$ and $\hat S$ is decreasing in~$u_0$
(for $1+u_1\bar S>0$, i.e.\ the stable interior case), $\hat S(\delta_F)<\hat S(\delta_W)$.
In the steady-state loss region, $p(\hat S,\hat u)<\bar S$ for all interior policies.

Starting from $S_0=\bar S$, skill evolves as $S(t) = \hat S_F + (\bar S-\hat S_F)\,e^{-\hat\kappa_F t}$
under the firm's linear policy, where $\hat\kappa_F = \kappa(1+u_{1,F}\bar S)>0$.
Because output $p(S,u^*_F(S))$ is quadratic in~$S$ and $S(t)$ is exponential in~$t$,
the worker's lifetime welfare can be written as
\[
  W(\delta_F)
  \;=\; \frac{p(\hat S_F,\hat u_F)}{\delta_W}
       + \frac{A}{\delta_W+\hat\kappa_F}
       + \frac{B}{\delta_W+2\hat\kappa_F},
\]
where $A$ and $B$ are bounded functions of the parameters that capture the
transient output deviation from steady state.
The no-AI benchmark is $V_W^{\text{no-AI}} = \bar S/\delta_W$.
The trap holds when
\[
  \frac{A}{\delta_W+\hat\kappa_F} + \frac{B}{\delta_W+2\hat\kappa_F}
  \;<\; \frac{\bar S - p(\hat S_F,\hat u_F)}{\delta_W}.
\]
As $\delta_F\to\infty$ the coefficients $a$ and $b$ vanish, so the firm's policy converges to the myopic policy $u^{\text{myo}}(S)=[\alpha+(\beta-1)S]/(2\gamma)$, and $W(\delta_F)$ converges to the worker's welfare under that policy, $W^{\text{myo}} = p(\hat S^{\text{myo}},\hat u^{\text{myo}})/\delta_W + T(\delta_W)$, where the transient term $T(\delta_W)$ remains bounded as $\delta_W\to 0$. By hypothesis the myopic steady state lies in the loss region, $p(\hat S^{\text{myo}},\hat u^{\text{myo}})<\bar S$, so the permanent deficit $[\bar S - p(\hat S^{\text{myo}},\hat u^{\text{myo}})]/\delta_W$ grows without bound as $\delta_W\to 0$ while the transient terms remain bounded. Hence there is a threshold $\bar\delta_W>0$ with $W^{\text{myo}}<\bar S/\delta_W$ for all $\delta_W<\bar\delta_W$. For such $\delta_W$, continuity of $W$ in $\delta_F$ gives $V_W(u^*_F)<V_W^{\text{no-AI}}$ for all sufficiently large $\delta_F$.\footnote{The condition on $\delta_W$ is not vacuous. A sufficiently impatient worker weights the transition surplus heavily enough that no discount-rate gap produces the trap. Monotonicity of $W(\delta_F)$ would establish a single threshold $\bar\delta_F$ above which the trap binds. We conjecture this holds generically but do not prove it.}

\textit{Worker skill externality ($\omega>0$, $\delta_F=\delta_W$).} The firm ignores $\omega$ and solves as if $\omega=0$.  The worker's welfare includes
the externality:
\[
  V_W^{\text{no-AI}} = \frac{(1+\omega)\bar S}{\delta_W},
  \qquad
  W_\omega = \int_0^\infty e^{-\delta_W t}\bigl[p(S(t),u^*_F(S(t)))+\omega\,S(t)\bigr]\,dt.
\]
Since the firm's policy is independent of~$\omega$, the skill trajectory
$S(t)$ and output path $p(t)$ are fixed.  Write
$W_\omega = W_0 + \omega\!\int_0^\infty e^{-\delta_W t}S(t)\,dt$,
where $W_0$ is the welfare at $\omega=0$.
Evaluating the integral using the exponential skill path:
\[
  \int_0^\infty e^{-\delta_W t}S(t)\,dt
  = \frac{\hat S_F}{\delta_W} + \frac{\bar S-\hat S_F}{\delta_W+\hat\kappa_F}.
\]
Therefore
\[
  W_\omega - V_W^{\text{no-AI}}
  = \underbrace{\bigl(W_0 - \bar S/\delta_W\bigr)}_{\text{productivity gap}}
    \;+\; \omega\left[\frac{\hat S_F}{\delta_W}
    + \frac{\bar S-\hat S_F}{\delta_W+\hat\kappa_F}
    - \frac{\bar S}{\delta_W}\right].
\]
The bracketed term equals
$-(\bar S-\hat S_F)\hat\kappa_F/[\delta_W(\delta_W+\hat\kappa_F)]<0$
when $\hat S_F<\bar S$ (i.e.\ when AI is adopted).
The first term $W_0-\bar S/\delta_W$ is non-negative by the aligned case (since $\delta_F=\delta_W$ and $\omega=0$).  But the negative $\omega$-term grows without bound in~$\omega$,
so for sufficiently large~$\omega$ the sum is negative and the trap holds.
\end{proof}

\section{Extensions}\label{app:extensions}

\subsection{Robustness: General Functional Forms}\label{app:robustness}

We now show that steady-state loss is not dependent on the linear-quadratic structure of the main model. Consider a single worker with skill $S(t)\in[0,\bar S]$ and AI usage $u(t)\in[0,1]$. Flow output is $y(S,u)$ and skill evolves according to $\dot S(t) = H(S(t),u(t))$.

\begin{definition}[No-AI steady state]
A no-AI steady state is a skill level $\bar S>0$ such that $H(\bar S,0)=0$, and it is \emph{locally stable} if $H_S(\bar S,0)<0$.
\end{definition}

\begin{assumption}[Skill dynamics]\label{ass:skill}
The skill law of motion $H$ satisfies:
(i) $H(\bar S,0)=0$ and $H_S(\bar S,0)<0$ (stable no-AI steady state);
(ii) $H_u(\bar S,0)<0$: at $\bar S$, a marginal increase in AI usage reduces net skill accumulation;
(iii) $H$ is continuously differentiable in a neighborhood of $(\bar S,0)$.
\end{assumption}

\begin{assumption}[Production]\label{ass:prod}
The flow output function $y$ satisfies:
(i) $y$ is continuously differentiable in a neighborhood of $(\bar S,0)$;
(ii) $y_S(\bar S,0)>0$: higher skill raises output at the no-AI steady state;
(iii) $y_u(\bar S,0)>0$: introducing AI at $\bar S$ raises current output.
\end{assumption}

\begin{lemma}[Steady-state skill response]\label{lem:Sstar}
Fix a constant usage level $u\in[0,1]$. Under Assumption~\ref{ass:skill}, there exists $\varepsilon>0$ and a differentiable function $S^\ast:[0,\varepsilon)\to\mathbb{R}$ with $S^\ast(0)=\bar S$ and
\[
S^{\ast\prime}(0) = -\frac{H_u(\bar S,0)}{H_S(\bar S,0)} < 0.
\]
\end{lemma}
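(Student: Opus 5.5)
The plan is to apply the implicit function theorem to the steady-state condition $H(S,u)=0$ near the point $(S,u)=(\bar S,0)$. For each constant usage level $u$, the steady-state skill is defined implicitly by $F(S,u):=H(S,u)=0$. By Assumption~\ref{ass:skill}(i), $F(\bar S,0)=0$. By Assumption~\ref{ass:skill}(iii), $F$ is $C^1$ in a neighborhood of $(\bar S,0)$. By Assumption~\ref{ass:skill}(i) again, $\partial F/\partial S=H_S(\bar S,0)<0$, so this partial derivative is nonzero. These are exactly the hypotheses of the implicit function theorem, with $S$ solved in terms of $u$.

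The theorem applies on an open neighborhood of $u=0$. It yields $\varepsilon>0$, a neighborhood $N$ of $\bar S$, and a unique $C^1$ function $S^\ast$ on $(-\varepsilon,\varepsilon)$ such that $S^\ast(0)=\bar S$ and $H(S^\ast(u),u)=0$. Within $N\times(-\varepsilon,\varepsilon)$, the points $(S^\ast(u),u)$ are the only zeros of $H$. If $H$ is only defined for $u\ge 0$, I would first extend it $C^1$-ly to negative $u$; a reflection-type extension works, as does working with one-sided derivatives. I would then restrict $S^\ast$ to $[0,\varepsilon)$. Differentiating the identity $H(S^\ast(u),u)\equiv 0$ at $u=0$ gives $H_S(\bar S,0)\,S^{\ast\prime}(0)+H_u(\bar S,0)=0$. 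Hence $S^{\ast\prime}(0)=-H_u(\bar S,0)/H_S(\bar S,0)$. The numerator $H_u$ is negative by Assumption~\ref{ass:skill}(ii) and the denominator $H_S$ is negative by (i), so the ratio is positive and $S^{\ast\prime}(0)<0$.

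As an optional complement, I would note that shrinking $\varepsilon$ keeps $H_S(S^\ast(u),u)<0$ by continuity. This means the branch $S^\ast(u)$ is also a locally stable steady state for the constant-$u$ dynamics, which is the interpretation used later. As a sanity check in the parametric model, $H=\kappa\bar S(1-u)-\kappa S$ gives $S^{\ast\prime}(0)=-(-\kappa\bar S)/(-\kappa)=-\bar S$. This matches $\hat S=\bar S(1-u)$.

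The only delicate step is the technical one: the usage domain $[0,1]$ is closed at $u=0$, while the implicit function theorem needs an open neighborhood. I expect the $C^1$ extension argument to handle this, with differentiability at $0$ understood one-sidedly. Everything else is routine.
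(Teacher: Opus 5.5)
Your proposal is correct and follows the paper's proof, which is a one-line appeal to the implicit function theorem for $H(S,u)=0$ at $(\bar S,0)$ using $H_S(\bar S,0)\neq 0$. You additionally spell out the derivative computation, the sign argument from Assumption~\ref{ass:skill}(i)--(ii), and the treatment of the closed endpoint $u=0$, all of which the paper leaves implicit.
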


\begin{proof}
By the implicit function theorem applied to $H(S^\ast(u),u)=0$ at $(\bar S,0)$, using $H_S(\bar S,0)\neq 0$.
\end{proof}

\begin{proposition}[General local steady-state loss condition]\label{prop:general-trap}
Under Assumptions~\ref{ass:skill} and~\ref{ass:prod}, define long-run output $y^\ast(u):=y(S^\ast(u),u)$. Then
\[
\frac{d y^\ast(u)}{du}\bigg|_{u=0}
=
\underbrace{y_u(\bar S,0)}_{\text{direct AI gain }B_0}
-
\underbrace{y_S(\bar S,0)\,m}_{\text{long-run skill cost }C_0},
\]
where $m := -S^{\ast\prime}(0) > 0$. There is local steady-state loss if and only if $0 < B_0 < C_0$.
\end{proposition}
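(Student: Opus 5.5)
The plan is to differentiate the long-run output map $y^\ast(u)=y(S^\ast(u),u)$ at $u=0$ with the chain rule, using the differentiable steady-state branch $S^\ast$ from Lemma~\ref{lem:Sstar}. Because $y$ is $C^1$ near $(\bar S,0)$ by Assumption~\ref{ass:prod}(i), and $S^\ast$ is differentiable with $S^\ast(0)=\bar S$, the composition is differentiable at $0$. This gives
\[
\frac{dy^\ast}{du}\bigg|_{u=0}=y_S(\bar S,0)\,S^{\ast\prime}(0)+y_u(\bar S,0).
\]
Writing $m:=-S^{\ast\prime}(0)=H_u(\bar S,0)/H_S(\bar S,0)$, I would note that $m>0$ because both partials are negative by Assumption~\ref{ass:skill}(i)--(ii). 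This yields the decomposition $B_0-C_0$ with $B_0=y_u(\bar S,0)$ and $C_0=y_S(\bar S,0)\,m$.

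Next I would fix the meaning of the ``iff''. I read ``local steady-state loss'' as the situation in which introducing a small amount of AI raises current output at $\bar S$ but lowers long-run output. Formally: $y_u(\bar S,0)>0$, and $y^\ast(u)<y^\ast(0)=\bar y:=y(\bar S,0)$ for all sufficiently small $u>0$. The condition $B_0>0$ is then exactly the ``raises current output'' clause; it is already guaranteed by Assumption~\ref{ass:prod}(iii), but I would keep it explicit in the statement.

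For the forward direction ($\Leftarrow$): if $B_0<C_0$, the derivative of $y^\ast$ at $0$ is strictly negative. By the definition of the derivative, $y^\ast(u)<y^\ast(0)$ on some interval $(0,\varepsilon')$, which is local loss. For the converse ($\Rightarrow$): if $y^\ast(u)<y^\ast(0)$ for small $u>0$, the one-sided derivative is $\le 0$, which gives only $B_0\le C_0$.

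The main obstacle is this knife-edge case $B_0=C_0$, where the first-order test is silent. I would handle it by defining local loss at first order, as a strictly negative derivative, in line with the word ``local'' and with how $c_1$ is used in Lemma~\ref{lem:loss-region}. Under that definition the equivalence is exact, and I would add a remark that the boundary case needs second-order information.

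Finally, I would sanity-check the result against the parametric model. There $H=\kappa(\bar S(1-u)-S)$ gives $m=\bar S$, and $y=S+[\alpha+(\beta-1)S-\gamma u]u$ gives $B_0=\alpha+(\beta-1)\bar S$ and $y_S=1$. The condition becomes $0<\alpha+(\beta-1)\bar S<\bar S$, i.e.\ $c_1<0$, matching Step~2 of Lemma~\ref{lem:loss-region}.
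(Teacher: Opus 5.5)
Your proof is correct and follows the paper's argument, which obtains the decomposition $B_0-C_0$ by a single application of the chain rule to $y(S^\ast(u),u)$ at $u=0$, using Lemma~\ref{lem:Sstar}. The paper then asserts the ``iff'' without comment, which implicitly reads local steady-state loss as a strictly negative first-order effect; your proposal makes that definition explicit and flags the knife-edge case $B_0=C_0$, a worthwhile clarification.
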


\begin{proof}
By the chain rule: $dy^\ast/du|_{u=0} = y_S(\bar S,0) S^{\ast\prime}(0) + y_u(\bar S,0) = B_0 - C_0$.
\end{proof}

\paragraph{Connection to the parametric model.}
In the learning-forgetting law $\dot S = \kappa[\bar S(1-u)-S]$, we have $H_S = -\kappa$ and $H_u = -\kappa\bar S$, so $m = \bar S$. With $y(S,u) = S + [\alpha+(\beta-1) S - \gamma u]u$, the loss condition $0 < B_0 < C_0$ becomes $0 < \alpha+(\beta-1)\bar S < \bar S$, matching the parametric results.

\subsection{Proof of Proposition~\ref{prop:stratification} (Skill Stratification)}\label{app:kcurve}

This appendix proves the  skill stratification result stated in the main text. The key condition is $(1-\beta+2\kappa a\bar S)\bar S > 2\gamma$ (equivalently, $(1-\beta)\bar S > 2\gamma$ when $\kappa\to 0$), which ensures that the unconstrained optimal policy spans a wide enough range to simultaneously prescribe full automation for low-skill workers and no AI for high-skill workers.

\begin{remark}[Stratification when $D<0$]\label{rem:D-negative}
The discriminant $D$ depends only on $\beta$, not $\alpha$, and equals zero at $\beta = 1 - \gamma(2\kappa+\delta)/(2\kappa\bar{S})$. Below this threshold, no stable interior steady state exists: the skill-dependent feedback is strong enough that workers are eventually pushed to one of the two corner steady states ($\hat{S}=\bar{S}$, $\hat{u}=0$) or ($\hat{S}=0$, $\hat{u}=1$), depending on initial skill. The five-region classification from the main text therefore applies only above this boundary; below it, the long-run outcome is $S_0$-dependent (Figure~\ref{fig:extended-region-map}).
\end{remark}

\begin{figure}[ht]
\centering
\includegraphics[width=0.75\linewidth]{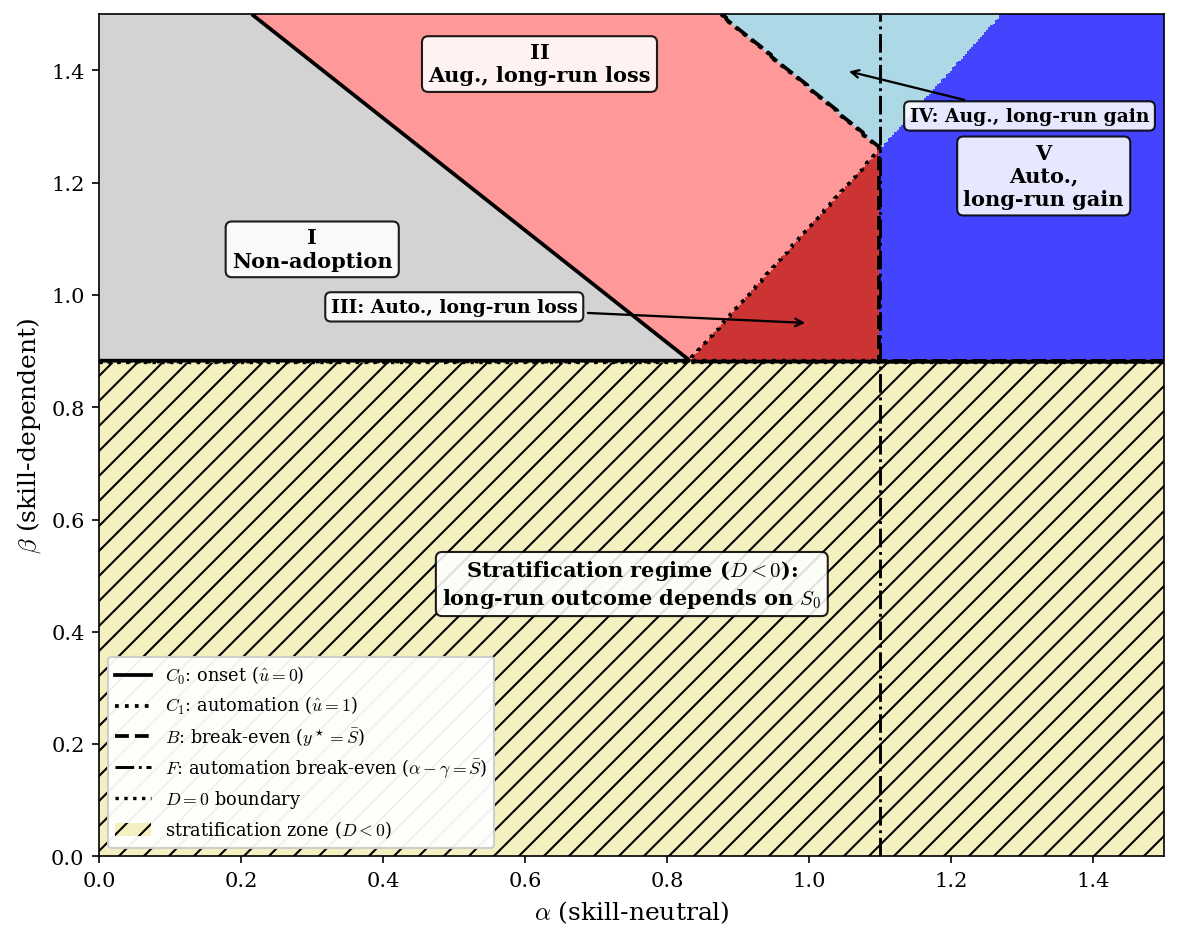}
\caption{\textbf{Extended region map with skill substitution.}
The map extends Figure~\ref{fig:nomathheatmap} to $\beta<1$. The hatched region marks the stratification regime, where no stable interior steady state exists and the long-run outcome depends on initial skill $S_0$. Parameters: $\kappa=0.25$, $\gamma=0.10$, $\delta=0.1$, $\bar S=1$.}
\label{fig:extended-region-map}
\end{figure}

\begin{figure}[ht]
\centering
\includegraphics[width=0.75\linewidth]{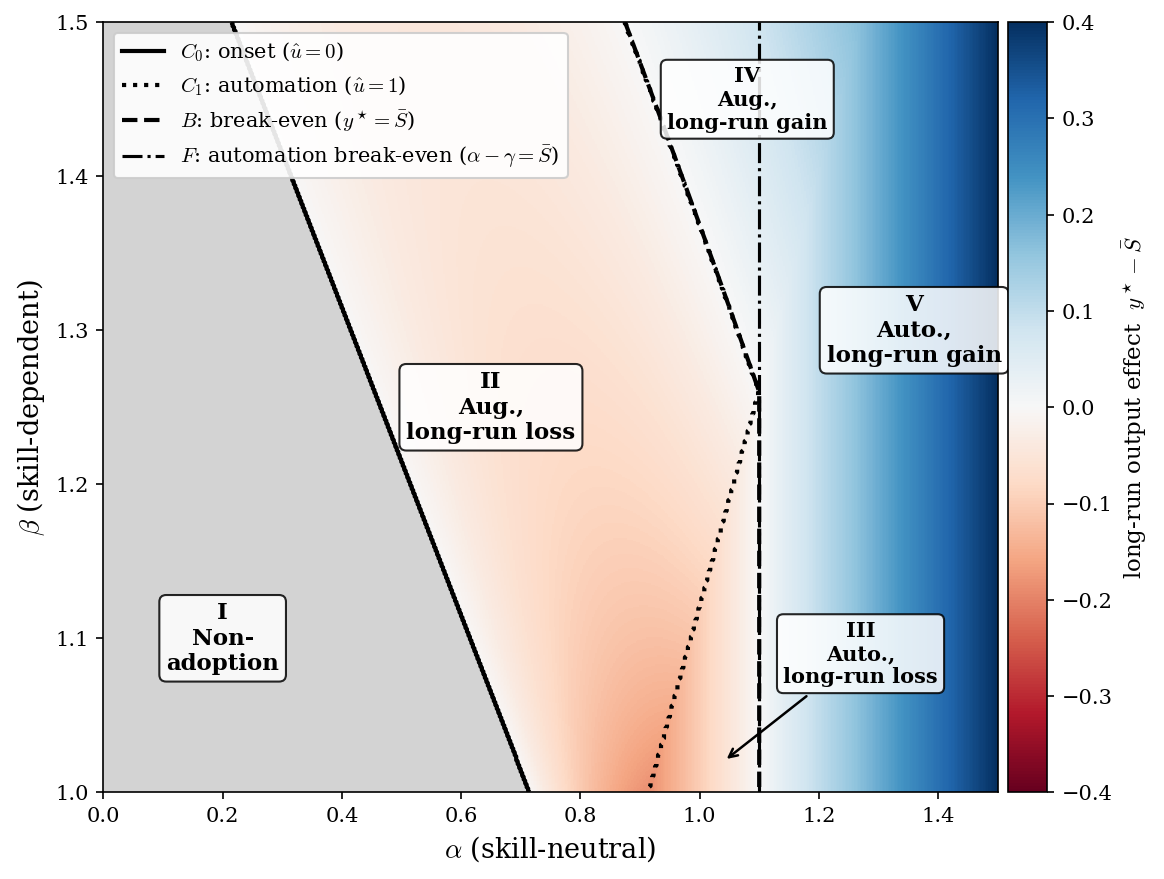}
\caption{\textbf{Magnitude of the long-run output effect.}
Shading shows $\Delta y^{ss}=y^\star-\bar S$ over the $(\alpha,\beta)$ space: red indicates long-run loss, blue indicates long-run gain, and gray indicates non-adoption. Boundaries match Figure~\ref{fig:nomathheatmap}. Parameters: $\gamma=0.10$, $\kappa=0.25$, $\delta=0.1$, $\bar S=1$.}
\label{fig:region-magnitude}
\end{figure}

\begin{proof}[Proof of Proposition~\ref{prop:stratification}]
Let $S_B := (1-u_0)/u_1$ denote the skill level at which the unconstrained policy $\tilde u(S)=u_0+u_1S$ reaches $u=1$, and $S_A := -u_0/u_1$ the level at which it reaches $u=0$; these coincide with the switching thresholds of Section~\ref{app:switching}. The hypotheses $\tilde u(0)=u_0>1$ and $\tilde u(\bar S)<0$ with $u_1<0$ give $0<S_B<S_A<\bar S$.

The condition $u_0>1$ means the unconstrained policy prescribes usage above the upper bound for low-skill workers: $\bar u=1$ for $S\le S_B$. Under full automation the dynamics reduce to $\dot S = -\kappa S$, so skill goes to zero. By symmetry, $\bar u=0$ for $S\ge S_A$, and $\dot S = \kappa(\bar S - S)>0$ drives skill toward $\bar S$.

In the region $S_B<S<S_A$, $\dot S = \kappa\bar S(1-u_0-u_1 S) - \kappa S$, with $\partial\dot S/\partial S = -\kappa(1+u_1\bar S)$. Since $u_0>1$ and $u_0+u_1\bar S<0$ imply $|u_1|\bar S>1$, we have $1+u_1\bar S<0$, so $\dot S$ is increasing in $S$. Setting $\dot S=0$ gives the unstable equilibrium $S_{\mathrm{eq}} = \bar S(1-u_0)/(1+u_1\bar S)$, and $u_0<|u_1|\bar S$ ensures $S_{\mathrm{eq}}\in(S_B,S_A)$. At $S=S_B$, $\dot S = -\kappa S_B<0$; since $\dot S$ is increasing in $S$, $\dot S<0$ on $(S_B,S_{\mathrm{eq}})$ and $\dot S>0$ on $(S_{\mathrm{eq}},S_A)$. Therefore, workers with $S_0<S_{\mathrm{eq}}$ lose skill, cross $S_B$, enter the full-automation regime, and converge to $\hat S=0$, while workers with $S_0>S_{\mathrm{eq}}$ drift up, cross $S_A$, and reach full potential at $\bar S$.
\end{proof}

The changes in the trajectories come from these regime transitions. Near $S_{\mathrm{eq}}$, $\dot S \approx 0$, so trajectories move slowly before entering either basin. Once a worker crosses $S_B$ or $S_A$, the dynamics produce kinks in the skill paths at those thresholds.

When $0<u_0\le 1$, the argument above shows $\dot S > 0$ everywhere below the zero-usage cutoff, so all workers eventually exit the adoption region and converge to a common steady state. The population split therefore requires that the optimal policy be aggressive enough to hit the upper constraint.

The condition $u_0+u_1\bar S< 0$ with $u_1<0$ (which follows from $\beta<1$) requires $|u_1|\bar S > u_0 > 1$, placing a lower bound on $|u_1|$ and hence on $1-\beta$. In the limit $\kappa\to 0$, the unconstrained policy reduces to $u^*(S) = (\alpha+(\beta-1) S)/(2\gamma)$, so the conditions $u^*(0)>1$ and $u^*(\bar S)< 0$ become $\alpha > 2\gamma$ and $\alpha < (1-\beta)\bar S$. These can be  satisfied if and only if $(1-\beta)\bar S > 2\gamma$. At $\gamma=1$, $\bar S = 1$, this would require $\beta<-1$, outside any economically meaningful range. Because $\bar S$ is a free parameter, however, the condition holds for $\beta\in(0,1)$ when $\bar S > 2\gamma/(1-\beta)$. For finite $\kappa$, substituting $u_0=(\alpha-\kappa b\bar S)/(2\gamma)$ and $u_1=(\beta-1-2\kappa a\bar S)/(2\gamma)$, the conditions $u_0>1$ and $u_0+u_1\bar S< 0$ become $\alpha\in(2\gamma+\kappa b\bar S,\;(1-\beta)\bar S+\kappa b\bar S+2\kappa a\bar S^2)$. The value-function coefficients $a,b$ depend on the model parameters through the coefficient-matching conditions~\eqref{eq:coeff-a}--\eqref{eq:coeff-b}; Figure~\ref{fig:k-curve} uses a parameter point inside this feasibility region with $D>0$.

Because usage decreases in skill when $\beta<1$, low-skill workers adopt more aggressively, and the resulting atrophy widens the skill gap during the transition. When $(1-\beta+2\kappa a\bar S)\bar S \le 2\gamma$, this divergence is temporary and all workers converge to a common steady state. When $(1-\beta+2\kappa a\bar S)\bar S > 2\gamma$, the divergence is permanent. The condition is easier to satisfy when workers span a wide range of potential (large $\bar S$) or when AI substitutes heavily for skill (small $\beta$).

\subsection{Proof of Proposition~\ref{prop:basin-switch}}\label{app:basin-switch}

\begin{proof}
Under the stratification condition ($\beta<1$, $(1-\beta+2\kappa a\bar S)\bar S>2\gamma$),
the optimal policy satisfies $u_0>1$ and $1+u_1\bar S<0$.
The threshold is
\[
  S_{\mathrm{eq}} = \frac{\bar S(1-u_0)}{1+u_1\bar S}.
\]
Both numerator and denominator are negative (since $u_0>1$ and $1+u_1\bar S<0$),
so $S_{\mathrm{eq}}>0$.

\textbf{Comparative statics in $\delta$.}\quad
The threshold is
$S_{\mathrm{eq}} = \bar S(1-u_0)/(1+u_1\bar S)$.
Implicitly differentiating the equilibrium condition
$u_0 + u_1 S_{\mathrm{eq}} = 1 - S_{\mathrm{eq}}/\bar S$
with respect to~$\delta$:
\[
  \frac{\partial S_{\mathrm{eq}}}{\partial\delta}
  = \frac{-\bigl(\frac{\partial u_0}{\partial\delta}
    + \frac{\partial u_1}{\partial\delta}\,S_{\mathrm{eq}}\bigr)}
    {u_1 + 1/\bar S}.
\]
The denominator satisfies $u_1 + 1/\bar S < 0$ under stratification, and the numerator is $-\partial u^*/\partial\delta$ evaluated at $S_{\mathrm{eq}}$, so $\partial S_{\mathrm{eq}}/\partial\delta$ has the sign of $\partial u^*(S_{\mathrm{eq}})/\partial\delta$. The sign of $b_\delta$ from Proposition~\ref{prop:overuse} is not available here, because that argument uses the stable case $\hat\kappa>0$ and stratification requires $\hat\kappa<0$, so we sign $\partial u^*(S_{\mathrm{eq}})/\partial\delta$ directly. The envelope identity~\eqref{eq:envelope} holds regardless of the sign of~$\hat\kappa$. Differentiating it in $S$ and evaluating at $S=S_{\mathrm{eq}}$, where $\dot S = 0$, gives
\[
  V_{S\delta}(S_{\mathrm{eq}}) = -\frac{V'(S_{\mathrm{eq}})}{\delta+\hat\kappa}.
\]
Since $\delta+2\hat\kappa = \sqrt{D}/\gamma > 0$ and $\hat\kappa<0$, we have $\delta+\hat\kappa>0$, and $V'(S_{\mathrm{eq}}) = 2aS_{\mathrm{eq}}+b > 0$ because $a>0$, $b>0$, and $S_{\mathrm{eq}}>0$. Hence $V_{S\delta}(S_{\mathrm{eq}})<0$, so $\partial u^*(S_{\mathrm{eq}})/\partial\delta = -\kappa\bar S\,V_{S\delta}(S_{\mathrm{eq}})/(2\gamma) > 0$, and $\partial S_{\mathrm{eq}}/\partial\delta > 0$.

\paragraph{Comparative statics in~$\omega$.}
From the proof of Proposition~\ref{prop:outside-options}, $a$ is independent of~$\omega$ (the $S^2$ coefficient-matching equation is unchanged), so $\partial u_1/\partial\omega=0$.
Since $\partial b/\partial\omega = 2\gamma/(\gamma\delta+\sqrt{D})>0$,
\[
  \frac{\partial u_0}{\partial\omega}
  = \frac{-\kappa\bar S}{2\gamma}\,\frac{\partial b}{\partial\omega} < 0.
\]
Therefore
\[
  \frac{\partial S_{\mathrm{eq}}}{\partial\omega}
  = \underbrace{\frac{-\bar S}{1+u_1\bar S}}_{>0}
    \cdot\underbrace{\frac{\partial u_0}{\partial\omega}}_{<0}
  < 0.
\]

\paragraph{Consequences.} \textit{Part 1.} Since $S_{\mathrm{eq}}$ is strictly increasing in~$\delta$ and $\delta_F>\delta_W$,
we have $S_{\mathrm{eq}}(\delta_F)>S_{\mathrm{eq}}(\delta_W)$.
Any worker with initial skill $S_0\in(S_{\mathrm{eq}}(\delta_W),\,S_{\mathrm{eq}}(\delta_F))$
lies above the threshold under the worker's policy (converging to~$\bar S$ by
Proposition~\ref{prop:stratification}) and below it under the firm's (converging to~$0$).

\medskip\noindent \textit{Part 2.} The firm ignores $\omega$ and uses $S_{\mathrm{eq}}(\omega\!=\!0)$.
The worker, who values skill at rate~$\omega$, would adopt less AI ($u_0$ lower),
producing $S_{\mathrm{eq}}(\omega)<S_{\mathrm{eq}}(0)$.
Any worker with $S_0\in(S_{\mathrm{eq}}(\omega),\,S_{\mathrm{eq}}(0))$
converges to~$\bar S$ under the worker's own policy but to~$0$ under the firm's.
\end{proof}

\subsection{Regime Switching Thresholds and Time to Entry}\label{app:switching}

As skill evolves under the optimal policy, a worker may cross the thresholds at which the feasible policy changes regime. From $u^*(S) = u_0 + u_1 S$, define the adoption threshold $S_A$ (where $u^* = 0$) and the automation threshold $S_B$ (where $u^* = 1$):
\begin{align}
S_A &= -\frac{u_0}{u_1} = \frac{\kappa b\bar S - \alpha}{(\beta-1) - 2\kappa a\bar S}, \label{eq:SA}\\[4pt]
S_B &= \frac{1 - u_0}{u_1} = \frac{2\gamma + \kappa b\bar S - \alpha}{(\beta-1) - 2\kappa a\bar S}. \label{eq:SB}
\end{align}
When $\beta > 1$, returns to skill are high and $u_1 > 0$, so $S_A < S_B$: workers below $S_A$ optimally avoid AI, while those above $S_B$ use more AI. When $\beta < 1$, returns to skill are lower and $u_1 < 0$, $S_A > S_B$, meaning low-skill workers use AI and high-skill workers abstain.

\paragraph{Time to entry from the no-AI regime.}
Workers starting at $S_{0} < S_A$ (when $\beta > 1$) or $S_0 > S_A$ (when $\beta < 1$) initially practice without AI. Under $u = 0$, skill follows $\dot S = \kappa(\bar S - S)$, giving $S(t) = \bar S - (\bar S - S_0)e^{-\kappa t}$. The time to reach the adoption threshold is
\begin{equation}\label{eq:tau-adopt}
\tau_A = \frac{1}{\kappa}\ln\!\left(\frac{\bar S - S_0}{\bar S - S_A}\right),
\end{equation}
provided $S_0 < S_A < \bar S$ (for $\beta > 1$). If $S_A \ge \bar S$, the worker abstains.

\paragraph{Time to entry from the full-automation regime.}
A worker who starts in $u = 1$ (skill below $S_B$ when $\beta < 1$) follows $\dot S = -\kappa S$, so $S(t) = S_0 e^{-\kappa t}$. Skill decays to zero. When $\beta > 1$ and a worker starts above $S_B$ with $u = 1$, skill decays until it reaches $S_B$:
\begin{equation}\label{eq:tau-revert}
\tau_B = \frac{1}{\kappa}\ln\!\left(\frac{S_0}{S_B}\right),
\end{equation}
after which the worker reverts to augmentation.

\paragraph{Dynamics in the interior regime.}
Substituting the interior policy \(u^*(S)=u_0+u_1S\) into the skill law gives
\[
\dot S
=
-\hat\kappa(S-S_{\mathrm{int}}),
\qquad
\hat\kappa \equiv \kappa(1+u_1\bar S),
\qquad
S_{\mathrm{int}}
\equiv
\frac{\bar S(1-u_0)}{1+u_1\bar S}.
\]
Thus, as long as the path remains interior,
\[
S(t)-S_{\mathrm{int}}
=
(S_0-S_{\mathrm{int}})e^{-\hat\kappa t}.
\]
If \(\hat\kappa>0\), the interior crossing is stable and skill converges to \(S_{\mathrm{int}}\) at rate \(\hat\kappa\), with time constant \(1/\hat\kappa\). This always holds for the complementary case \(\beta>1\), since then \(u_1>0\). If \(\hat\kappa<0\), the interior crossing is unstable: for any \(S_0\neq S_{\mathrm{int}}\), deviations grow at rate \(|\hat\kappa|\) until the path reaches a usage boundary. Under the stratification conditions, initial skill below \(S_{\mathrm{int}}\) leads to the full-automation region \(u=1\), while initial skill above \(S_{\mathrm{int}}\) leads to the no-automation region \(u=0\).

\end{document}